\DeclareSymbolFont{matha}{OML}{txmi}{m}{it}
\DeclareMathSymbol{\varv}{\mathord}{matha}{118}
\newcounter{eqnback}
\begin{document}

\title{Towards a Realistic Assessment of Multiple Antenna HCNs: Residual Additive Transceiver Hardware Impairments and Channel Aging}
\author{Anastasios Papazafeiropoulos  and Tharm Ratnarajah   \vspace{2mm} \\
\thanks{Copyright (c) 2015 IEEE. Personal use of this material is permitted. However, permission to use this material for any other purposes must be obtained from the IEEE by sending a request to pubs-permissions@ieee.org.}
\thanks{A. Papazafeiropoulos and T. Ratnarajah are  with the  Institute for Digital Communications (IDCOM), University of Edinburgh, Edinburgh, EH9 3JL, U.K., (email: {a.papazafeiropoulos, t.ratnarajah}@ed.ac.uk). }
\thanks{This work was supported by the U.K. Engineering and Physical Sciences Research Council (EPSRC) under grants EP/N014073/1 and EP/N015312/1.}}
\maketitle

\vspace{-1cm}

\begin{abstract}
Given the critical dependence of broadcast channels by the accuracy of channel state information at the transmitter (CSIT), we develop a general downlink model with zero-forcing (ZF) precoding, applied in realistic  heterogeneous cellular systems with multiple antenna base stations (BSs). Specifically, we take into consideration imperfect CSIT due to pilot contamination, channel aging due to users relative movement, and unavoidable residual additive  transceiver hardware impairments (RATHIs). Assuming that the BSs are Poisson distributed, the main contributions focus on the derivations of the upper bound of the coverage probability and the achievable user rate for this general model. We show that both the coverage probability and the user rate are dependent on the imperfect CSIT and RATHIs. More concretely, we  quantify the resultant performance loss of the network  due to these effects. We depict that the uplink RATHIs have equal impact, but the downlink transmit BS distortion has a greater impact than the receive  hardware impairment of the user. Thus, the transmit BS hardware should be of better quality than user's receive hardware.  Furthermore, we characterise  both the  coverage probability and user rate in terms of the time variation of the channel.  It is shown that both of them decrease with increasing user mobility, but after a specific value of the normalised Doppler shift, they increase again. Actually, the time variation,  following the Jakes autocorrelation function, mirrors this effect on coverage probability and user rate. Finally, we consider space division multiple access (SDMA), single user beamforming (SU-BF), and baseline single-input single-output (SISO) transmission. A comparison among these schemes reveals that the coverage by means of SU-BF outperforms SDMA in terms of coverage.
\end{abstract}
\begin{keywords}
Channel estimation, channel aging, additive hardware impairments,  coverage probability, multiple antenna  heterogeneous cellular networks.
\end{keywords}

 \section{Introduction}
The design of the future Fifth Generation (5G) networks, demanding to cover the upcoming avalanche of wireless traffic volume due to the accompanied societal development, is quite challenging.  Intercell interference is considered as a key limiting factor among the next generation of wireless systems~\cite{Thompson2014,Wong2017}, which include both a vector Gaussian broadcast and interference channels by means of multi-user multiple-input multiple-output (MU-MIMO) and multi-cell scenarios, respectively. Simplistic models such as the Wyner model~\cite{Shamai1997}, where the intercell interference is assumed to be a constant factor of the total interference, are highly inaccurate given the dramatic variation of the signal-to-interference-plus-noise ratio (SINR) value across a cell. Nevertheless, other works consider a fixed user or a small number of interfering BSs which might provide tractable results but very pessimistic with not much insight on users' performance~\cite{Goldsmith2005}. 
 
Fortunately, tractable and accurate models have been developed for studying the downlink coverage, taking into account the full network interference~\cite{Andrews2011,Dhillon2012,Jo2012}. Specifically, the introduction of the randomness regarding the locations of the BSs by means of a heterogeneous Poisson point process (PPP), which allows the usage of tools from stochastic geometry, has taken place. In this technique, known as heterogeneous networks (HetNets) design, small cells are embodied into a macrocell network with main benefits being a dense coverage and ubiquitous high throughput~\cite{Andrews2012}. Hence, novel results regarding the coverage probability have been derived to quantify the quality of service in next generation networks without the need for Monte-Carlo simulations. HetNets design belongs to the major technologies currently on the table for 5G. The heterogeneous cellular networks (HCNs), which are the focal point of this study, can be considered as HetNets of a single tier.

Although appealing in their concept, HCNs as any other network, are  hampered by the inevitable transceiver hardware impairments~\cite{Schenk2008,Studer2010,Goransson2008,Bjornson2012Optimal,Bjoernson2013,Qi2012,Mehrpouyan2012,Papazafeiropoulos2015b,PapazafeiropoulosMay2016,Papazafeiropoulos2017b,PapazafeiropoulosJuly2016,PapazafeiropoulosJuly2017,Papazafeiropoulos2016b,Papazafeiropoulos2016,Papazafeiropoulos2015c,Papazafeiropoulos2017a,Papazafeiropoulos2017}. The impact of hardware impairments is a major challenge because the applied compensation algorithms, which include analog and digital signal processing, cannot remove the impairments completely, since the time-varying hardware characteristics cannot be fully parameterized and estimated, and because there is a randomness induced by different sources of noise~\cite{Schenk2008,Bjornson2015,Zhu2017,Papazafeiropoulos2016}. Especially, cheap hardware components, being attractive for industrial implementation, are particularly prone to the transceiver impairments. Such impairments are originated by amplifier non-linearities, I/Q-imbalance, and quantization errors~\cite{Schenk2008,Studer2010,Qi2012}, and can mainly be modelled as residual additive impairments~\cite{Schenk2008,Studer2010,Goransson2008,Bjornson2012Optimal,Bjoernson2013,Papazafeiropoulos2015b,Papazafeiropoulos2017b,PapazafeiropoulosMay2016,PapazafeiropoulosJuly2016,Papazafeiropoulos2016b,PapazafeiropoulosJuly2017,Papazafeiropoulos2016,Papazafeiropoulos2015c,Papazafeiropoulos2017a,Papazafeiropoulos2017}. In the literature, three basic categories are met, namely, the residual additive impairments, the multiplicative impairments, and the amplified thermal noise \cite{Bjornson2015,Zhu2017}. The residual additive impairments, modelled as independent additive distortion noises at the BS as well as at the user, describe the aggregate effect from many impairments. On the other hand, there are hardware impairments multiplied with the channel vector, which might cause channel attenuation and phase shifts.  They appear as a multiplicative distortion that cannot be incorporated by the channel vector. Herein, our analysis  focuses on the impact of the residual additive hardware impairment, while the study of multiplicative impairments is left for future work. Regarding the introduction of amplified thermal noise, this is straightforward. Moreover, it is worthwhile to mention that the adoption of the current model for the additive impairments is based on its analytical tractability and the experimental verifications. Remarkably, some of the authors have achieved to introduce the rate-splitting approach as a robust method against the RATHIs, although these impairments are residual \cite{Papazafeiropoulos2017,Papazafeiropoulos2017a}. The topic of dealing with other methods and strategies to mitigate the RATHIs is left for future work. Notably, since HetsNets are a candidate solution for 5G systems~\cite{Andrews2014,Zheng2015}, they need to be evaluated in the presence of the detrimental transceiver hardware imperfections, in order to make realistic conclusions regarding their final implementation. It is worthwhile to mention that the research area of HCNs has not considered this kind of inevitable degradations except~\cite{Papazafeiropoulos2016b}, which in turn, brings out a gold rush for new research.  Note that~\cite{Papazafeiropoulos2016b} assumes perfect channel state information at the transmitter (CSIT), neglecting that, in practice, CSIT is imperfect due to several reasons.

Several works have investigated the impact of generally imperfect CSIT in different scenarios of HetNets \cite{Li2015a,Geraci2014,Li2015,Yang2016,Dhungana2017}. However, except \cite{Bai2016}, no other previous work has taken into account for pilot contamination due to the re-use of the pilot sequences or users' mobility, which both are inevitable degrading causes on system's performance and result in imperfect CSIT~\cite{Marzetta2010,Truong2013,Papazafeiropoulos2015a}. In particular, pilot contamination is an inherent weakness in next generation systems aiming to employ the concept of massive MIMO which include a  large number of antennas (tens or hundreds) at each BS~\cite{Marzetta2010}. Notably, massive MIMO are based on time-division duplexing (TDD) mode for channel estimation. With respect to users' mobility and its induced time variation of the channel,  a lack of investigation appears in the literature~\cite{Truong2013,Papazafeiropoulos2015a}. Its practical importance becomes higher, especially, in outdoor urban environments, where the mobility of terminals is increased. Despite that numerous studies, concerning 5G systems, have been presented (see \cite{5g} and references therein), the lack of channel aging works continues. The impact of terminal mobility is often modelled by means of a stationary ergodic Gauss-Markov block fading \cite{Baddour2005,Truong2013,Papazafeiropoulos2015a,Papazafeiropoulos2014WCNC,Papazafeiropoulos2014a,Kong2015,Kong2015a,Papazafeiropoulos2014a,Papazafeiropoulos2016a}. For example, in~\cite{Truong2013 }, the authors provided deterministic equivalents (DEs) for the maximal-ratio-combining (MRC) receivers in the uplink and the maximal-ratio-transmission (MRT) precoders in the downlink)\footnote{The deterministic equivalents are deterministic tight approximations of functionals of random matrices of finite size.}. This analysis was extended in~\cite{Papazafeiropoulos2015a,Papazafeiropoulos2014WCNC} by deriving DEs for the more sophisticated minimum mean-square error (MMSE) receivers (for the uplink) and regularized zero-forcing precoders (for the downlink). It is worthwhile to mention that in the analysis of this paper, we consider both the pilot contamination and the relative user mobility.

\subsection{Motivation-Central Idea}
This work relies on the observation that the promising HCNs, meant to be applied in 5G systems, have been mostly evaluated under the assumptions of perfect hardware and static environment, while these are highly unrealistic. In addition, no imperfect CSIT has been assumed regarding HCNs MIMO systems. Only in~\cite{Bai2016}, the authors have studied the uplink of massive MIMO systems with pilot contamination.  Hence, the fundamental question behind this work is ``how the transceiver impairments, the pilot contamination, and the channel aging affect the downlink performance of realistic HCNs, when imperfect CSIT is accounted?'' Motivated by recent performance analysis results, we are going to establish the theoretical framework modelling the additive residual transceiver hardware impairments (RATHIs) and imperfect CSIT, and identify the realistic potentials of HCNs before their final implementation. 

The main contributions are summarisedd as follows.
\begin{itemize}
  \item Contrary to existing work~\cite{Papazafeiropoulos2016b} which has studied the effect of RATHIs in the case of perfect CSIT,  we calculate the estimated channel due to RATHIs, pilot contamination,  and channel aging, and then, we  evaluate the effect of RATHIs and imperfect CSIT on the  performance of downlink realistic HCNs in terms of the coverage probability and the user rate. For the sake of comparison, we also present the results corresponding to perfect hardware and no user mobility.  As far as the authors are aware, this is the most general result in the literature that accounts for practical impairments and imperfect CSIT.
  \item Contrary to \cite{Bai2016}, where only the uplink performance was investigated for massive MIMO, we focus on the downlink. Moreover, we present more general results, since the proposed metrics  can describe scenarios with not only finite number of BS antennas, but also with large number of antennas. On the other hand, works such as \cite{Bai2016} describe only the large number of antennas regime.
 \item We investigate and elaborate on the impact of RATHIs and imperfect CSIT on the downlink coverage probability and the downlink achievable user rate. Actually, without being vague regarding the sources of imperfect CSIT in HetNets such as in \cite{Li2015a,Geraci2014,Li2015,Yang2016,Dhungana2017}, we consider the presence of pilot contamination, channel aging, and hardware impairments. Specifically, we show that the uplink hardware impairments have an equal effect on the estimation of the channel. However, the downlink hardware impairments behave differently. The BS transmit impairments degrade more the system performance than the user receive impairments. Moreover, the higher the time variation of the channel, the higher the degradation of the system. Accordingly, a proper system design should take into account these observations. 
 \item We focus on the design of a realistic HCN and investigate, if it is beneficial to transmit with space division multiple access (SDMA) or to employ fewer antennas under practical conditions.  In fact, single-stream transmission looks better than SDMA. In other words, the claim having more antennas is always beneficial is not necessarily correct, as it heavily depends on how the transmit antennas are used and which transmission/reception scheme is employed from a system perspective.
 \end{itemize}

    The   remainder of this paper is structured as follows.  Section~\ref{System} presents the basic parameters of the system model of a HCN with randomly located BSs having multiple antennas and serving multiple users. In Section~\ref{Presentation}, a description of the RATHIs is provided. In Section~\ref{estimation}, the channel estimation phase, considering pilot contamination and channel aging, is modelled  under the presence of RATHIs. Next, Section~\ref{downlink} exposes the   downlink transmission under RATHIs and imperfect CSIT.  Subsection~\ref{coverage} includes the  derivation and investigation of the coverage probability in a realistic HCN with multiple antenna  BSs impaired by RATHIs, when multiple users are served. Remarkably, a relative movement of the users with comparison to the BS antennas is considered as well. In Subsection~\ref{AverageAchievableRate1}, the derivation of the achievable user rate is provided under the same realistic conditions.  The numerical results are placed in Section~\ref{Numerical}, while Section~\ref{Conclusion} summariseds the paper.

\textit{Notation:} Vectors and matrices are denoted by boldface lower and upper case symbols. $(\cdot)^\T$, $(\cdot)^*$,  $(\cdot)^\H$, and $\tr\!\left( {\cdot} \right)$ express the transpose, conjugate, Hermitian  transpose, and trace operators, respectively. The expectation  and variance operators are denoted by $\EE\left[\cdot\right]$ and $\mathrm{Var}\left[\cdot\right]$. The $\mathrm{diag}\{\cdot\}$ operator generates a diagonal matrix from a given vector, and the symbol $\triangleq$ declares definition. The notations $\mathcal{C}^{M \times 1}$ and $\mathcal{C}^{M\times N}$ refer to complex $M$-dimensional vectors and  $M\times N$ matrices, respectively. The indicator function $ \mathds{1}(e)$ is $1$ when event $e$ holds and $0$ otherwise, and $\mathrm{J}_{0}(\cdot)$ is the zeroth-order Bessel function of the first kind.  Moreover,   $\mathrm{B}\left(x,y \right)$ is the Beta function defined in~\cite[Eq.~(8.380.1)]{Gradshteyn2007}, and $\Gamma\left( x,y \right)$ denotes the Gamma distribution with shape and scale parameters $x$ and $y$, respectively. Furthermore, $\underset{x \in A}{\cup}$ denotes the union with $A$ being an index set. Also, $\mathcal{L}_{I}\!\left(s \right)$ expressed the Laplace transform of $I$. Finally, $\bb \sim \cC\cN{(\b0,\mathbf{\Sigma})}$ represents a circularly symmetric complex Gaussian variable with zero-mean and covariance matrix $\mathbf{\Sigma}$.
 \section{System Model}\label{System} 
In this paper, we consider a network cellular MU system with a BS per cell that has multiple antennas and serves multiple users. The locations of the BSs are drawn according to an independent PPP $\Phi_{B}$ with density $\lambda_{B}$. In other words, we refer to the formulation of MU-MIMO HCNs.   Herein, we note that the locations of the users are also  modelled by an independent PPP $\Phi_{k}$   $\lambda_{k}$. In addition, the  BS in the $l$th cell, deployed with a number of BS antennas $M_{l}$, communicates with its associated users whose number is $K_{l}$ such that $K_{l}\le M_{l}$. Hence,  many degrees of freedoms are shared\footnote{Given this model, it is realistic that the downlink transmit power $p_{l}$,  the number of BS antennas $M_{l}$, and the number of users served by each BS $K_{l}$ differ across cells. However, for the sake of simplicity, we assume universal parameters in this setting, i.e., $p_{l}=p$, the number of BS antennas $M_{l}=M$, and the number of users served by each BS $K=K_{l}\le M_{l}=M $.}. Actually, we assume that each cell is large enough, i.e., the $l$th macrocell can  accommodate $K$ users connected with the nearest BS. In other words, $K$ users, that are independently distributed, belong to the Voronoi cell of this BS, while a Voronoi tessellation is structured by  the set of all these cells. Exploiting Slivnyak's theorem we are able to conduct the analysis by focusing on a typical user found at the origin~\cite{Chiu2013a}.  Basically, we focus on the downlink scenario of communication between the  BS and the associated users. During the uplink, channel estimation takes place, while we elaborate further on the downlink data transmission, where we derive the coverage probability and study its realistic behaviour.

Regarding HCNs, their evolution, started from the downlink single-input single-output (SISO) systems~\cite{Andrews2011}, has enabled the coexistence of multiple antenna strategies~\cite{Kountouris2012,Dhillon2013} such as beamforming and SDMA. In fact, HCNs and multiple antenna strategies  coexist and complement each other. Thus, they should not be studied in isolation, as happened in the premature literature in this area. For this reason, we focus on the impact of RATHIs on multiple antenna HCNs. 

All point-to-point channels are characterised by independent and identically distributed (i.i.d.) Rayleigh block fading models with unit mean. Moreover, the same time-frequency resources are shared by the users across all cells. Note that we assume that the channel coherence time is $T_{\mathrm{c}}$. At the same time, aiming to provide a realistic analysis, we assume imperfect CSIT due to pilot contamination, channel aging, and RATHIs. In other words, the BS is aware of the estimated channel, obtained during the training phase and  having duration $\tau$ symbols, while the downlink transmission phase has a duration of $T_{\mathrm{c}}-\tau$ symbols.

\section{Presentation of RATHIs}\label{Presentation} 
In practice, both the users and the BSs are affected by certain  additive  impairments~\cite{Schenk2008,Studer2010}. Although mitigation schemes are implemented in both the transmitter and receiver, these are not perfect. Therefore, RATHIs  still emerge by means of residual additive distortion noises~\cite{Schenk2008,Studer2010}. More concretely, at the transmitter side, an impairment emerges that  causes a mismatch between the intended signal and what is actually transmitted during the transmit processing, while at the receiver side the received signal appears a distortion.  

The study of the impact of RATHIs has originated from conventional wireless systems and has continued to 5G networks such as massive MIMO systems~\cite{Schenk2008,Goransson2008,Bjornson2012Optimal,Studer2010,Bjoernson2013,Zhang2014,Bjornson2014,Bjornson2015,Papazafeiropoulos2015b,PapazafeiropoulosMay2016}. Unfortunately, the majority of HCNs literature, except \cite{Papazafeiropoulos2016b} relies on the assumption of perfect transceiver hardware, although hardware imperfections exist. Reasonably, it is conjectured that by following the same path will increase the gap between theory and practice. 

Mathematically speaking, given the channel realisations, the conditional transmitter and  receiver  distortion noises for the $i$th link are modelled as Gaussian distributed, where their average power is proportional to the average signal power, as shown by measurement results~\cite{Studer2010}. 

Let us denote the transmit and receive nodes as nodes  $i$ and $j$, where $i=\mathrm{UE}$ and  $j=\mathrm{BS}$ for the uplink, while $i=\mathrm{BS}$ and  $j=\mathrm{UE}$ for the downlink.   In addtion, let $T_\mathrm{i}$ be the number of transmit antennas, i.e., $T_{\mathrm{UE}}=1$ for the uplink and $T_{\mathrm{BS}}=M$ for the downlink. $\bQ_{i}[n]$ is the transmit covariance matrix at time instance $n$ of the corresponding node with diagonal elements $q_{\mathrm{i}_1}[n],\ldots,q_{T_\mathrm{i}}[n]$, e.g., if the transmitter node is the UE, $\bQ_{i}[n]$ degenerates to a scalar $Q_{\mathrm{UE}}[n]$.  Specifically, the RATHIs at the transmitter and the receiver are given by
\begin{align}
 \etv_{\mathrm{t},n}^{\mathrm{i}}&\sim \cC\cN\left( \b0,\bm \Lambda^{\mathrm{i}}_{n} \right)\label{eta_t} \\
 \etv_{\mathrm{r},n}^\mathrm{j}&\sim \cC\cN \left( \b0,\bm \Upsilon^{\mathrm{j}}_{n} \right)\label{eta_r},
\end{align}
where $\bm \Lambda^{\mathrm{i}}_{n}= \kappa_{\mathrm{t}_\mathrm{i}}^{2}\mathrm{diag}\left( q_{1}[n],\ldots,q_{T_\mathrm{i}}[n] \right)$ and 
$\bm \Upsilon^{\mathrm{j}}_{n} =\kappa_{\mathrm{r}_{\mathrm{j}}}^{2}\|x_{k,n}\|^{-\alpha}\sum_{k=1}^{|j |} \bh_{k}^{\H}[n]\bQ_{k}[n]\bh_{k}[n] $. Note that if $j=\mathrm{UE}$, then $|j|=1$. The circularly-symmetric complex Gaussianity can be justified by the aggregate contribution of many impairments\footnote{The additive distortions   are time-dependent because they take new realisations for each new data signal.}. The proportionality parameters $\kappa_{\mathrm{t}_i}^{2}$ and $\kappa_{\mathrm{r}_j}^{2}$   describe the severity of the residual impairments at the transmitter and the receiver side. In applications, these parameters are met as the error vector magnitudes (EVM) at each transceiver side~\cite{Holma2011}. The procedure for obtaining the knowledge of the estimated  channel is described in the following section.

\section{Channel Estimation}\label{estimation} 
 The transmit signal by each user to its BS is attenuated with distance $r$ as $r^{-\alpha}$, where $\alpha$ is the path-loss exponent parameter. During the channel estimation, an effect, known as pilot contamination due to the  re-use of the pilot sequences, might arise. It is worthwhile to mention that both the system model and the proposed expressions below can describe any number of BS antennas, e.g., both small and large number of antennas. Indeed, the results are quite general, and will be derived by means of a common analysis followed in the study of HetNets. In other words, contrary to works for massive MIMO that can describe only the large antenna regime, the proposed expressions can describe the whole spectrum in terms of the number of antennas (from small to large number). Notably, the pilot contamination concerns systems with any number of antennas. However, it is negligible for small number of antennas, while its presence is observable for large MIMO cellular systems \cite{Marzetta2010}. Hence, since our model is able to describe any number of antennas (even a large number), pilot contamination is meaningful in the current work.
 \subsection{Pilot Contamination}
 Each BS estimates the CSI during an uplink training phase, where the sharing of the same band of frequencies leads to degradation of the performance of the system due to pilot contamination. If the subscript $\mathrm{tr}$ denotes the training stage, the noisy observation  of the channel vector from the  user $k$ (typical user at the origin) at time instance $n$, transmitting each pilot symbol with average power $\rho^{\mathrm{UE}}_{\mathrm{up}}$ $\left( p_{\mathrm{tr}}=\tau \rho^{\mathrm{UE}}_{\mathrm{up}} \right)$ to its associated BS in the presence of RATHIs,  is given by
\begin{align}
\!\!\!\tilde{\by}_{k,\mathrm{tr}}&[n]
\!= \! \underbrace{\bh_{k}[n]\| x_{k,n}\|^{-\alpha/2}}_{\minibox[c]{Desired \\signal}}\!+\!\underbrace{\bh_{k}[n]\etv_{\mathrm{t},n}^{\mathrm{UE}}\| x_{k,n}\|^{-\alpha/2}}_{\minibox[c]{UE transmit\\ impairment}} \nn\\&\!+\!\!\!\!\!\underbrace{\etv_{\mathrm{R},n}^{\mathrm{BS}}}_{\minibox[c]{BS receive\\ impairment}}
\!\!+\!\! \!\underbrace{\sum_{l \in \Phi_{B}/ b_{0}}\!\!\! \!\!\!\bg_{lk}[n]\|y_{k,n}\|^{-\alpha/2}}_{\minibox[c]{Interference\\part}}\! \!+\! \!\underbrace{\frac{1}{\sqrt{p_{\mathrm{tr}}}} \bN_{\mathrm{tr}}[n]\bpsi^{\H}_{k}}_{\mbox{noise}},\label{eq:Ypt3}
\end{align}
where $\bh_{k}[n]\in \mathcal{C}^{M\times 1}$ is the desired channel vector from the $k$th user in the current cell (located at the origin) to its associated BS located at $x_{k,n}$, while the interference term $\bg_{lk}[n]\in \mathcal{C}^{M\times 1}$ is the channel vector corresponding to the link from the $k$th user of the $l$th cell located at $y_{lk,n}\in \mathcal{R}^{2}$. The vector $\bpsi_{k}\in \mathcal{C}^{\tau\times 1}$ denotes the training sequence of the $k$th user with $\bpsi_{k}\bpsi_{k}^{\H}$=1 and  $\bN_{\mathrm{tr}}[n]\in \mathcal{C}^{M\times \tau}$ is a spatially white additive Gaussian
noise matrix with i.i.d entries at the current base station  during the training stage, which are  distributed as $\mathcal{CN}\left( 0,1 \right)$. Similarly, the channel vectors $\bg_{lk,n} \forall l,n$ are Gaussian distributed as $\mathcal{CN}\left( \b0,\mathbf{ I}_{M} \right)$, and they are independent across cells and user distances. Based on~\eqref{eta_t} and~\eqref{eta_r} for $\mathrm{i}=\mathrm{UE}$ and $\mathrm{j}=\mathrm{BS}$, respectively, and after making the appropriate substitutions, the hardware impairments at the uplink are written such that their variances are given by
\begin{align}
\Lambda^{\mathrm{UE}}_{n}&=\kappa_{\mathrm{t}_\mathrm{UE}}^{2}\rho_{up}^{\mathrm{UE}}\nn\\
\bm \Upsilon^{\mathrm{BS}}_{n}&=\kappa_{\mathrm{r}_\mathrm{BS}}^{2} \rho_{up}^{\mathrm{UE}}\|x_{k,n}\|^{-{\alpha}}\sum_{k=1}^{K}  \mathrm{diag}\left( |h_{k}^{1}[n]|^{2},\ldots,|h_{k}^{M}[n]|^{2} \right)\nn,
\end{align}
where $\bQ_{\mathrm{UE}}[n]$ is now deterministic and $\rho_{up}^{\mathrm{UE}}=\tr\left( \bQ_{\mathrm{UE}}[n] \right)$


After applying the minimum mean square error (MMSE) estimation to~\eqref{eq:Ypt3}, in order to estimate $\bh_{k}[n]$ conditioned on the distance $\| x_{k,n}\|$ from the $k$th user, the current BS  obtains
\begin{align}\label{estimated1} 
& \hat{\bh}_{k}[n]\Big|_{\| x_{k,n}\|}\!=\!{\mathrm{E}\!\left[\bh_{k}[n]\tilde{\by}_{k,\mathrm{tr}}^{\H}[n]  \right]}{\mathrm{E}^{-1}\!\left[\tilde{\by}_{k,\mathrm{tr}}[n]\tilde{\by}_{k,\mathrm{tr}}^{\H}[n]\right]}\tilde{\by}_{k,\mathrm{tr}}[n]\nn\\
 &=\frac{{\sqrt{p_{\mathrm{tr}}}}\| x_{k,n}\|^{-\alpha/2}}{p_{\mathrm{tr}}\| x_{k,n}\|^{-\alpha}\left( \kappa^{2}_{\mathrm{r},\mathrm{BS}}+\kappa^{2}_{\mathrm{t},\mathrm{UE}} \right)+p_{\mathrm{tr}}\mathrm{Var}\left( I_{n} \right)+1}\tilde{\by}_{k,\mathrm{tr}}[n],\!\!
\end{align}
where $\displaystyle I_{n}\!\!=\!\!\sum_{l \in \Phi_{B}/ b_{0}}\!\! \!\! \bg_{lk}[n]\|y_{k,n}\|^{-\alpha/2}$ represents the inteference contribution at time $n$ and $\mathrm{Var}\left( \cdot \right)$ is the variance operator conditioned on $\| x_{k,n}\|$. Specifically, conditioning on $\| x_{k,n}\|$, $\mathrm{Var}\left( I_{n} \right)$ can be derived by using Campbell's theorem~\cite{Chiu2013} as
\begin{align}
 \mathrm{Var}\!\left( I_{n} \right)&=2 \EE\!\left(\|\bg_{lk}[n]\|^{2}\right)\pi \lambda_{B}\int_{\| x_{k,n}\|}^{\infty}r^{1-\al}\mathrm{d}r\nn\\
 &=2 M \pi \lambda_{B}\frac{\| x_{k}\|^{2-\alpha}}{\alpha-2},
\end{align}
where we have used that the square of $\bg_{lk}[n]$ follows a $\Gamma\left(M,1\right)$ distribution with mean value equal to $M$.  Clearly, increase of the path-loss exponent and BSs' density, increases the variance of the interference.

As a result, the estimated channel  $\hat{\bh}_{k}$ and estimation error vectors $\bee_{k}[n]={\bh}_{k}[n]-\hat{\bh}_{k}[n]$ at time instance $n$ are uncorrelated and distributed as $\hat{\bh}_{k}[n]\sim \mathcal{CN}\left( \b 0, \sigma_{\hat{\bh}_{k}}^{2}  \mathbf{I}_{M}\right)$ and $\hat{\bee}_{k}[n]\sim \mathcal{CN}\left( \b 0, \sigma_{\hat{\bee}_{k}}^{2}  \mathbf{I}_{M}\right)$ with
\begin{align}
 \sigma_{\hat{\bh}_{k}}^{2}=\frac{\left( \al-2 \right)}{ 
 \left( \kappa^{2}_{\mathrm{r},\mathrm{BS}}+\kappa^{2}_{\mathrm{t},\mathrm{UE}}+1 \right)\left( \al-2 \right)+  2 M \pi \lambda_{B}\| x_{k}\|^2 +\| x_{k}\|^\al}
    \end{align}
and 
\begin{align}
\sigma_{\hat{\bee}_{k}}^{2}\!=\!\frac{ \left(\kappa^{2}_{\mathrm{r},\mathrm{BS}}\!+\!\kappa^{2}_{\mathrm{t},\mathrm{UE}}\right)\left( \al\!-\!2 \right)\!+ \!2 M \pi \lambda_{B}\| x_{k}\|^2 }{\left(\kappa^{2}_{\mathrm{r},\mathrm{BS}}\!+\!\kappa^{2}_{\mathrm{t},\mathrm{UE}}\!+\!1 \right)\!\left( \al\!-\!2 \right)\!+\! 2 M \pi \lambda_{B}\| x_{k}\|^2+\| x_{k}\|^\al }.
\end{align}

\begin{remark}
If we set the distortion parameters in~\eqref{estimated1} or in any other expression including the RATHIs equal to zero, we result in the expressions corresponding to the ideal uplink model, which does not account for RATHIs.  
\end{remark}
\begin{remark}
 Interestingly, in the interference-limited regime, the covariances of the estimated channel and the estimation error do not depend on the uplink transmit power $p_{\mathrm{tr}}.$ Moreover, in these expressions and under these conditions, the distance from the typical user is not raised to the path-loss parameter $\al$. Thus, the covariances are not environment-dependent.
\end{remark}
\begin{remark}
The uplink RATHIs, $\kappa^{2}_{\mathrm{t},\mathrm{UE}}$ and $\kappa^{2}_{\mathrm{r},\mathrm{BS}}$, have the same effect on the accuracy of the estimated channel. Therefore, equal caution should be given in the quality of the user transmit and BS receive hardware.
\end{remark}

\subsection{Channel Aging}
Another additive reason with negative effect that necessitates the estimation of the channel  is imposed by the  relative movement of the users with comparison to the BS antennas. In such case, the description of the channel can be made by a time-varying model. In mathematical terms, the relationship of the current sample of the  channel with its past samples, i.e., the time varying-model at time-slot $n$, is represented by an autoregressive-model of order $1$ that includes  the second order statistics of the channel in terms of its autocorrelation function being  generally a function of velocity of the user, the  propagation geometry, and the antenna characteristics. Basically, this is a Gauss-Markov model of low order $\left( 1 \right)$ for reasons of computational complexity and tractability.  Actually, the current channel between the BS  and the typical user belonging to its cell is modelled as
\begin{align}
\bh_{k}[n]  =& \delta \bh_{k}[n-1] + \bee_{k}[n],\label{eq:GaussMarkoModel}
\end{align}
where $\bh_{k}[n-1]$ is the channel in the previous symbol duration and $\bee_{k}[n] \in \bbC^{N}$ is an uncorrelated channel error due to the channel variation modelled as a stationary Gaussian random process with i.i.d.~entries and distribution $\cC\cN(\b0,(1-\delta^2)\Id_{N}$~\cite{Vu2007}.

Given that the variation of the channel is described by means of its second order statistics, we employ the autocorrelation function of the channel, which is an appropriate measure. We choose the Jakes model for representing the autocorrelation function, which is widely accepted due to its generality and simplicity \cite{Baddour2005}. The Jakes model describes a propagation medium with two-dimensional isotropic scattering and a monopole antenna at the receiver \cite{WCJr1974}. Mathematically, the normalised discrete-time autocorrelation function of the fading channel is expressed by
\begin{align}
r_{h}[k] 
=& \mathrm{J}_{0}(2 \pi f_{D}T_{s}|k|).\label{eq:scalarACF}
\end{align}
Specifically,  $f_{D}$ and $T_{s}$ are the maximum Doppler shift and the channel sampling period, respectively. Regarding the maximum Doppler shift $f_{D}$, it can be expressed  in terms of the relative velocity of the  $v$, i.e., $f_{D}=\frac{v f_{c}}{c}$, where $c=3\times10^{8}~\nicefrac{m}{s}$ is the speed of light and $f_{c}$ is the carrier frequency.  Note that we assume that the base stations have perfect knowledge of $\delta=r_{h}[1]$.

Interestingly, we are able to combine both effects of pilot contamination and time-variation of the channel according to~\cite{Truong2013,Papazafeiropoulos2015a}. Specifically, we have that the fading channel at time slot $n$ can be expressed by
\begin{align}   
 \bh_{k}[n]&=\delta {\bh}_{k}[n-1]+\bee_{k}[n]\nonumber\\
&=\delta \hat{\bh}_{k}[n-1]+ \tilde{\bee}_{k}[n],\label{eq:MMSEchannelEstimate}
\end{align}
where $\hat{\bh}_{k}[n-1]$ and $\tilde{\bee}_{k}[n]= \delta \tilde{\bh}_{k}[n-1]+\bee_{k}[n]\sim \mathcal{CN}\left(\b0,\sigma_{\tilde{\bee}_{k}}^{2} {\mathrm{ \bI}}_{M} \right)$ with $\sigma_{\hat{\bee}_{k}}^{2}=\left(1-\delta^{2}\sigma_{\hat{\bh}_{k}}^{2} \right)$
are mutually independent. In other words, the estimated channel at time $n$ is now $\hat{\bh}_{k}[n]=\delta \hat{\bh}_{k}[n-1]$.
\begin{remark}
Imperfect CSIT is the result of three different sources, namely, i) pilot contamination, ii) channel aging, and iii) RATHIs. 
\end{remark}

\section{Downlink Transmission}\label{downlink} 
The channel power distribution of a link depends upon its physical representation, i.e., if it describes the desired or the interference contribution to the received signal or we address to a multi-antenna BS deployment with single or multi-user transmission.  However, in order to arrive at the stage of the statistical distribution of the power, we have to model the downlink transmission. Note that if $\delta=0$, we obtain a static environment with no user mobility.

This paper assumes linear precoding by the matrix $\bW[n]=[\bw_1,\ldots,\bw_K]\in \mathbb{C}^{M\times K}$, employed by the associated BS, which multiplies the data signal vector $\bd[n] = \big[d_{1}[n],\dots,~d_{K}[n]\big]^\T \in \mathbb{C}^{M}\sim \mathcal{CN}(\b0,\bI_{K})$ for all users in that cell. In our case, we employ ZF precoding that has the form 
\begin{align}\label{precoder}
\hat{\bW}[n]&=\bar{\bH}[n]\left( \bar{\bH}^{\H}[n]\bar{\bH}[n] \right)^{-1}\\
&=\delta^{-1} \bar{\bH}^{\dagger}[n-1]=\delta^{-1} \hat{\bW}[n-1],  \label{delayedPrec}                                                                                                                                                                                                                                                                                                                                                                            \end{align}
where $\bar{\bH}[n]$ is the normalised version of $\hat{\bH}[n]$ and is related to $\hat{\bH}[n]$ as defined in Appendix~\ref{SINRproof}, while $\left( \cdot \right)$ denotes the pseudo-inverse of a matrix. Thus, the precoder is normalised and  the average transmit power per user of the associated BS is constrained to $p$, i.e., $\mathbb{E}\big[\mathrm{tr}\left( \hat{\bW}[n]\hat{\bW}^{\H}[n] \right)\big]=1$. 
Note that $\hat{\bW}[n]$ denotes the ZF precoder of the associated BS at time $n$ as well as  in~\eqref{delayedPrec} we have introduced the user mobility effect for the $k$th user by means of $\hat{\bh}_{k}[n]=\delta \hat{\bh}_{k}[n-1]$. 

Thus, the received signal from the associated BS to the typical  user at $x_{k,n}$  during the $n$th time-slot after applying the ZF precoder, accounting for a quasi-static block fading model with frequency-flat fading channels varying for symbol to symbol and RATHIs, can be expressed as\footnote{Since we focus on the investigation of the interference-limited SIR, its expression does not depend on the transmit power.}
\begin{align}
 y_{k}[n]&=\bh_{k}^{\H}[n]|x_{k,n}\|^{-\al/2}\left( \bs[n]+\etv^{\mathrm{BS}}_{\mathrm{t},n}  \right)+\eta^{\mathrm{UE}}_{\mathrm{r},n}\nn\\
 &+\!\!\!\sum_{l\in \Phi_{B}/x_{k,n}}\!\!\!\!\bg_{lk}^{\H}[n] \bs_{lk}[n]\|y_{lk,n} \|^{-\al/2}\label{signal} ,
\end{align}
where  $\bs[n] =\bW[n] \bd[n] \in \mathbb{C}^{M \times 1}$ is the  transmit signal vector for the $k$th user with covariance matrix $\bQ_{\mathrm{BS}}[n]=\EE\left[ \bs[n]\bs^{\H}[n]\right] $ and   $p=\tr\left( \bQ_{\mathrm{BS}}[n] \right)$ is the associated average transmit power.  The channel vector $\bh_{k}[n] \in \mathbb{C}^{M\times 1}$  denotes the desired  channel vector between the associated BS located at $x_{k,n}\in \mathbb{R}^{2}$ and the typical user at  time-instance $n$. Similarly,  $\bg_{lk}[n] \in \mathbb{C}^{M\times 1}$ expresses the interference channel vector from the BSs found at $y_{lk,n}\in \mathbb{R}^{2}$ far from the typical user at time-instance $n$.  Especially, in the case of Rayleigh fading, the channel power distributions of both the direct and the interfering links follow the Gamma distribution~\cite{Huang2011}.  

Given that we have assumed the realistic scenario of  RATHIs as well as imperfect CSI due to pilot contamination and time-variation of the channel (channel aging) (see~\eqref{eq:MMSEchannelEstimate} and~\eqref{delayedPrec}),  the received signal by user $k$ can be written as\footnote{ We assume that the RATHIs from other BSs are negligible due to the increased path-loss, while   the transmit hardware impairment depends only on the transmit signal power from the tagged BS and not from the path-loss.},\footnote{Herein, we assume that the thermal noise is negligible as compared to the distortion noises and interference from the other cells as shown by simulations and previously known results. Hence, its omission is reasonable. However, it can be included in the proposed analysis with no extra special effort.}
\begin{align}
 y_{k}[n]&= \hat{\bh}_{k}^{\H}[n-1]\hat{\bw}_{k}[n-1] \bd[n] \|x_{k,n} \|^{-\al/2}\nn\\
 &+\delta^{-1}\!\tilde{\bee}_{k}^{\H}[n]\hat{\bW}[n-1] \bd[n]\|x_{k,n} \|^{-\al/2}\nn\\
&+ \bh_{k}^{\H}[n] \|x_{k,n} \|^{-\al/2}\etv^{\mathrm{BS}}_{\mathrm{t},n}+\eta^{\mathrm{UE}}_{\mathrm{r},n}\nn\\
  &+\!\sum_{l\in \Phi_{B}/x_{k,n}}\!{\bg}_{lk}^{\H}[n] \hat{{\bW}}_{l }[n]\bd_{l}[n]\|y \|^{-\al/2}, \label{filteredsignal}
\end{align}
where we have used~\eqref{eq:MMSEchannelEstimate} for replacing the current desired channel\footnote{The replacement concerns only the current desired channel because the interference part is not of direct interest and can be kept in the initial form.}. In addition, we have used~\eqref{delayedPrec} to  express the current precoder in terms of its delayed instance known at the BS.  The first term in~\eqref{filteredsignal} represents the desired signal contribution in the typical current cell, while the second term describes the estimation error effect. Further, the third   term describes the contribution due to the transmit BS impairment. The fourth term represents the receive impairment part because of the hardware impairments at the user side. Moreover, the next term characterises the  inter-cell interference coming from BSs in different cells. 

$\etv_{\mathrm{t},n}^{\mathrm{BS}}\sim \cC\cN\left( \b0,\bm \Lambda^{\mathrm{BS}}_{n} \right)$ and $\eta_{\mathrm{r},n}^{\mathrm{UE}}\sim \cC\cN \left( \b0, \Upsilon^{\mathrm{UE}}_{n} \right)$ are the residual downlink additive Gaussian distortions at the BS and the UE, which are given by~\eqref{eta_t} and~\eqref{eta_r} for $\mathrm{i}=\mathrm{BS}$ and $\mathrm{j}=\mathrm{UE}$, respectively. Specifically, we have
\begin{align}
 \bm \Lambda^{\mathrm{BS}}_{n}&=\kappa_{\mathrm{t}_\mathrm{BS}}^{2} \mathrm{diag}\left( q_{1}[n],\ldots,q_{M}[n] \right)\\
  \Upsilon^{\mathrm{UE}}_{n}&=\kappa_{\mathrm{r}_\mathrm{UE}}^{2}\|x_{k,n}\|^{-{\al}}\bh_{k}^{\H}[n]\bQ_{\mathrm{BS}}[n]  \bh_{k}[n].
\end{align}

\begin{remark}
The receive distortion at user $k$ includes the path-loss coming from the associated BS. 
\end{remark}

The achievable downlink SIR $ \gamma_{k}$ from the  associated BS to the $k$th user  can be defined as in~\eqref{eq: general sum_rate} at the top of the next page. Note that we have assumed encoding of the message over many realisations of all sources of randomness in the model including noise, channel estimate error, and RATHIs. Based on this remark, the hardware impairments are written as
$  \etv_{\mathrm{t},n}^{\mathrm{BS}}\sim \cC\cN(\b0,{\kappa_{\mathrm{t}_{\mathrm{BS}}}^{2}}  \Id_{K})$  and  $\eta_{\mathrm{r},n}^{\mathrm{UE}}\sim \cC\cN(\b0,\kappa_{\mathrm{r}_{\mathrm{UE}}}^{2}\|x_{k,n}\|^{-\al}\|\bh_{k}[n]\|^{2} )$.

\begin{remark}
If  the distortion parameters in~\eqref{filteredsignal} or in any other expression including the downlink RATHIs are set to zero, we result in the expressions corresponding to the ideal downlink model, which does not account for RATHIs~\cite{Dhillon2013}.  
\end{remark}

In order to develop this general model, we assume that the desired channel power from the associated BS at time $n=\tau+1,\ldots, T_{\mathrm{c}}$ located at $x_{k,n}\in \mathbb{R}^{2}$ to the typical user $k$, found at the origin, is given by $Z_{k}[n]$, while the interfering link from other BSs, i.e., the inter-cell interference from other cells' BSs (located at $y_{lk,n} \in \mathrm{R}^{2}$) is denoted by $I_{l}[n]$. 
\setcounter{eqnback}{\value{equation}} \setcounter{equation}{18}
\begin{figure*}[!t]
\begin{align}\label{eq: general sum_rate}
     \gamma_{k}
    &=
        \frac{
             Z_{k}[n] \|x_{k,n} \|^{-\al}
            }{\left( E_k[n]+
            I_{ \etv_{\mathrm{t}}}[n]
            +I_{ \etv_{\mathrm{r}}}[n]\right) \|x_{k,n} \|^{-\al}+
     I_{ l}[n] }.
\end{align}
\hrulefill
\end{figure*}
\setcounter{eqnback}{\value{equation}} \setcounter{equation}{19}

\begin{proposition}\label{SINR} 
 The  SIR of the downlink transmission from the associated BS to  the typical user, accounting for   RATHIs and imperfect CSI due to pilot contamination and time variation of the channel due to user mobility,  can be represented  as in~\eqref{eq: general sum_rate}
 where the probability density function (PDF) of the desired signal power is obtained by
 \begin{align}\label{eq PDF1 1}
    p_{Z_{k}[n]}\left( z \right)=
       \frac{
            e^{-z/\sigma_{\hat{\bg}_{k}}^{2}}
            }{
            \left(M-K\right)!
         \sigma_{\hat{\bh}_{k}}^{2}
            }
        \left(
            \frac{
                z
                }{
               \sigma_{\hat{\bh}_{k}}^{2}
                }
        \right)^{M-K}, ~ z \geq 0    
        \end{align}
        and the various terms are given by
 \begin{align}
 Z_{k}[n]&=|\hat{\bh}_{k}^{\H}[n-1]\hat{\bw}_{k}[n-1]|^2 \\
 E_k[n]&=\delta^{-2} \left(1+\kappa_{\mathrm{t}_{\mathrm{BS}}} ^{2}\right)
               \big\|
                    \tilde{\bee}_{k}^{\H}[n]\hat{\bW}[n-1]
              \big\|^2\\
          I_{ \etv_{\mathrm{t}}}[n]&= \kappa_{\mathrm{t}_{\mathrm{BS}}} ^{2}|\hat{\bh}_{k}^{\H}[n-1]\hat{\bw}_{k}[n-1]|^2\\
I_{ \etv_{\mathrm{r}}}[n]&={\kappa_{\mathrm{r}_{\mathrm{UE}}}^{2}}\|{\bh}_{k}[n]\|^{2}\\
I_{l}[n]&= 
            \sum_{l \in \Phi_{B}/x_{k,n} }   \big|{\bg}_{lk}[n]\hat{\bW}_{l}[n] \big|^{2}\|y \|^{-\al}\nn\\
           &= \sum_{l \in \Phi_{B}/x_{k,n}} g_{lk,n}\|y \|^{-\al}.
            \end{align}
\end{proposition}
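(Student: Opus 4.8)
The plan is to expand the received signal \eqref{filteredsignal} term by term, extract the power of each contribution by averaging over the data vector $\bd[n]\sim\CN(\b0,\bI_K)$ and the distortion noises while conditioning on the channel realisations, and then assemble the ratio \eqref{eq: general sum_rate}; the distribution of the desired gain is handled separately at the end. First I would observe that the five summands in \eqref{filteredsignal}---desired signal, estimation-error leakage, BS transmit distortion, UE receive distortion, and inter-cell interference---are mutually uncorrelated, since $\bd[n]$, $\tilde\bee_k[n]$, $\etv^{\mathrm{BS}}_{\mathrm{t},n}$, $\eta^{\mathrm{UE}}_{\mathrm{r},n}$ and the interfering channels $\bg_{lk}[n]$ are independent and zero-mean. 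Hence the received power decomposes additively and each term can be read off directly. Averaging the desired term over $d_k\sim\CN(0,1)$ gives the power $Z_k[n]=|\hat\bh_k^\H[n-1]\hat\bw_k[n-1]|^2$, carrying the common direct-link path-loss $\|x_{k,n}\|^{-\al}$.

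Next I would treat the distortion and leakage terms. Writing $\bh_k[n]=\delta\hat\bh_k[n-1]+\tilde\bee_k[n]$ from \eqref{eq:MMSEchannelEstimate} and the delayed precoder $\hat\bW[n]=\delta^{-1}\hat\bW[n-1]$ from \eqref{delayedPrec}, both the data and the beam-space transmit distortion $\etv^{\mathrm{BS}}_{\mathrm{t},n}\sim\CN(\b0,\kappa_{\mathrm{t}_{\mathrm{BS}}}^2\Id_K)$ leak through the estimation-error direction $\tilde\bee_k$; averaging over $\bd$ and $\etv^{\mathrm{BS}}_{\mathrm{t},n}$ and using $\EE_{\bd}[\,|\tilde\bee_k^\H\hat\bW\bd|^2]=\|\tilde\bee_k^\H\hat\bW\|^2$ produces the combined factor $(1+\kappa_{\mathrm{t}_{\mathrm{BS}}}^2)$, so that $E_k[n]=\delta^{-2}(1+\kappa_{\mathrm{t}_{\mathrm{BS}}}^2)\|\tilde\bee_k^\H[n]\hat\bW[n-1]\|^2$. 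The transmit distortion that stays in the $k$th beam passes through the effective channel, giving $I_{\etv_{\mathrm t}}[n]=\kappa_{\mathrm{t}_{\mathrm{BS}}}^2|\hat\bh_k^\H[n-1]\hat\bw_k[n-1]|^2=\kappa_{\mathrm{t}_{\mathrm{BS}}}^2Z_k[n]$, while the receive distortion $\eta^{\mathrm{UE}}_{\mathrm{r},n}$ with variance $\kappa_{\mathrm{r}_{\mathrm{UE}}}^2\|x_{k,n}\|^{-\al}\|\bh_k[n]\|^2$ yields $I_{\etv_{\mathrm r}}[n]=\kappa_{\mathrm{r}_{\mathrm{UE}}}^2\|\bh_k[n]\|^2$. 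Averaging the inter-cell term over the foreign data streams gives $I_l[n]=\sum_{l\in\Phi_B/x_{k,n}}g_{lk,n}\|y\|^{-\al}$ with $g_{lk,n}=|\bg_{lk}[n]\hat\bW_l[n]|^2$. Since the first four powers share the common path-loss $\|x_{k,n}\|^{-\al}$ whereas $I_l[n]$ does not, factoring it out of numerator and denominator yields exactly \eqref{eq: general sum_rate}.

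The technical core is the PDF of $Z_k[n]=|\hat\bh_k^\H[n-1]\hat\bw_k[n-1]|^2$. Here I would use the ZF structure of \eqref{precoder}: because the precoder orthogonalises the users, the effective gain of user $k$ is, up to the power normalisation $\EE[\tr(\hat\bW\hat\bW^\H)]=1$, governed by the reciprocal of the $k$th diagonal entry of $(\hat\bH^\H[n-1]\hat\bH[n-1])^{-1}$. As the columns of $\hat\bH[n-1]$ are i.i.d.\ $\CN(\b0,\sigma_{\hat\bh_k}^2\bI_M)$, the Gram matrix $\hat\bH^\H\hat\bH$ is central complex Wishart, and the standard result on its inverse gives that this reciprocal is $\Gamma(M-K+1,\sigma_{\hat\bh_k}^2)$-distributed. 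Substituting the shape $M-K+1$ and scale $\sigma_{\hat\bh_k}^2$ into the Gamma density returns \eqref{eq PDF1 1}.

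The main obstacle is precisely this last step---establishing the shape parameter $M-K+1$ rather than $M$. I would make it rigorous by conditioning on the other $K-1$ estimated channels, projecting $\hat\bh_k$ onto the orthogonal complement of their span (which removes exactly $K-1$ complex degrees of freedom, by the rotational invariance of the isotropic Gaussian $\hat\bh_k$), and observing that the resulting $\Gamma(M-K+1,\sigma_{\hat\bh_k}^2)$ law does not depend on the conditioning, so it is also the unconditional law. The remaining items---checking that the distortion variances reduce as stated under the encoding-over-realisations convention, and that the power normalisation is consistent---are routine bookkeeping.
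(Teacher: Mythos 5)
Your proposal is correct and follows the same overall route as the paper's own proof: expand the received signal \eqref{filteredsignal}, average over the data and the distortion noises conditioned on the channel realisations to identify the powers $Z_{k}[n]$, $E_{k}[n]$, $I_{\etv_{\mathrm{t}}}[n]$, $I_{\etv_{\mathrm{r}}}[n]$, $I_{l}[n]$, factor out the common path-loss to obtain \eqref{eq: general sum_rate}, and then establish that the desired power is $\Gamma\left(\Delta,\sigma_{\hat{\bh}_{k}}^{2}\right)$ distributed with $\Delta=M-K+1$, which is exactly \eqref{eq PDF1 1}. The one place where you genuinely diverge is the central distributional step. The paper writes $Z_{k}[n]=|\bar{\bh}_{k}^{\H}[n]\hat{\bw}_{k}[n]|^{2}\cdot\|\hat{\bh}_{k}[n]\|^{2}$ and invokes the known factorisation of the ZF effective gain into independent $\mathrm{B}\left(M-K+1,K-1\right)$ and $\Gamma\left(M,\sigma_{\hat{\bh}_{k}}^{2}\right)$ variables (citing Jindal's result), whose product is $\Gamma\left(\Delta,\sigma_{\hat{\bh}_{k}}^{2}\right)$. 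You instead identify $|\hat{\bh}_{k}^{\H}\hat{\bw}_{k}|^{2}$ with $1/[(\hat{\bH}^{\H}\hat{\bH})^{-1}]_{kk}$, i.e., the squared norm of the projection of $\hat{\bh}_{k}$ onto the orthogonal complement of the span of the other $K-1$ estimated channels, and conclude via rotational invariance of the isotropic Gaussian. The two arguments are equivalent (your projection argument is essentially the proof of the Beta--Gamma factorisation), but yours is self-contained where the paper defers to a reference. A further point in your favour: your explicit beam-space treatment of the transmit distortion $\etv_{\mathrm{t},n}^{\mathrm{BS}}\sim\cC\cN\left(\b0,\kappa_{\mathrm{t}_{\mathrm{BS}}}^{2}\Id_{K}\right)$, leaking both through $\hat{\bh}_{k}[n-1]$ to give $I_{\etv_{\mathrm{t}}}[n]=\kappa_{\mathrm{t}_{\mathrm{BS}}}^{2}Z_{k}[n]$ and through $\tilde{\bee}_{k}[n]$ to give the $\left(1+\kappa_{\mathrm{t}_{\mathrm{BS}}}^{2}\right)$ factor in $E_{k}[n]$, actually justifies the term definitions in the proposition, which the paper's appendix takes as given; indeed the appendix there asserts a scaled $\Gamma\left(M,\sigma_{\hat{\bh}_{k}}^{2}\right)$ law for the transmit distortion, which is inconsistent with the shape $\Delta$ used in the proposition and in Lemma~\ref{LaplaceTransformGamma}, whereas your derivation supports the proposition's (and Lemma~\ref{LaplaceTransformGamma}'s) version.
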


Note that $\bw_{k}$ is the $k$th column of $\bW_{k}$.
In addition, we have $
I_{ \etv_{\mathrm{t}}}[n]\sim {p_{k}\kappa_{\mathrm{BS}}^{2}}\Gamma (\Delta,\sigma_{\hat{\bh}_{k}}^{2})$,  and $
I_{ \etv_{\mathrm{r}}}[n]\sim  p_{k}{\kappa_{\mathrm{UE}}^{2}}\Gamma (M,1)$
while the total interference from all the other base stations found at a distance $\|y_{lk,n}\|$ from the typical user 
is
 $I_{l}[n] 
           = \sum_{l \in \Phi_{B}/x_{k,n}} p_{l}g_{lk}[n]\|y_{lk,n}\|^{-\al}$, 
            where $g_{lk}[n]=\big|{\bg}_{lk}[n]\hat{\bW}_{l}[n] \big|^{2}\sim \Gamma (K,1)$.

\begin{proof}
See Appendix~\ref{SINRproof}.
\end{proof}

It is worthwhile to mention that the distance $\| x_{k,n}\|$ in the uplink and the downlink expresses two different  variables and no confusion should arise. Specifically, during the uplink, the covariance of the estimated channel that includes $\| x_{k,n}\|^{-\alpha}$ is calculated for a given distance. Thus, it is deterministic, while its instance in the downlink is randomly distributed.

\section{Main Results}
Herein, we present the coverage probability and the achievable rate of the typical user, which are the main results of this work.

\subsection{Coverage Probability}\label{coverage} 
The focal point of this subsection is the derivation of an upper bound of the downlink coverage probability of the typical user in a multiple antenna HCN under the presence of RATHIs as well as imperfect CSIT due to pilot contamination and channel aging. Please note that the derivation of a lower bound, which is quite meaningful, is left for future work. Specifically, this subsection first presents a formal definition of the coverage probability with random BS locations drawn from a PPP. Next, the main result is provided with the technical derivation given in Appendix~\ref{CoverageProbabilityproof}. Remarkably, although we start from an abstract definition, we result in the most general and versatile expression known in the literature towards a more realistic assessment of a network with randomly located BSs.
Interestingly, we present a more general result than~\cite{Papazafeiropoulos2016b}, since now imperfect CSIT is assumed, when the BSs are randomly located. It is based on the calculation of the Laplace transforms provided by means of Proposition~\ref{LaplaceTransform} and Lemma~\ref{LaplaceTransformGamma} provided below. 
\begin{definition}[\!\!\cite{Dhillon2013,Papazafeiropoulos2016b}]
A typical user  is  in  coverage if its effective  downlink SIR from at least one of the randomly located BSs in the network is higher
  than the corresponding target $T$. In general, we have
 \begin{align}
 p_{c}\left(  T,\lambda_{B},\al,\delta,q \right)\triangleq\EE\left[\mathds{1}\!\left( 	\underset{x \in \Phi_{B}}{\cup} \mathrm{SIR\left( x \right)>T} \right)  \right],
\end{align}
where   $q$ defines a set of parameters. Specifically, we define $q\triangleq\{\kappa_{\mathrm{t}}^{\mathrm{UE}},\kappa_{\mathrm{r}}^{\mathrm{UE}},\kappa_{\mathrm{t}}^{\mathrm{BS}},\kappa_{\mathrm{r}}^{\mathrm{BS}}\}$. 
 \end{definition}

\begin{theorem}\label{theoremCoverageProbability} 
The upper bound of the downlink probability of coverage  $p_{c}\left(  T,\lambda_{B},\al,\delta,q  \right)$ in a general cellular network with randomly distributed multiple antenna BSs, accounting for RATHIs and imperfect CSIT due to pilot contamination and channel aging, is given by
 \begin{align}
\!\!&p_{c}\left(  T,\lambda_{B},\al,\delta,q \right)\!\le\!\!\lambda_{B}\int_{l \in  \mathds{R}^2}\!\!
\sum_{i=0}^{\Delta-1}\!\sum_{k=0}^{i}\!\sum_{u_1+u_2+u_3=i-k}\!\!\binom{i}{k}\!\nn\\
                  &\!\times\binom{i\!-\!k}{u_1+u_2+u_3}\!\frac{\left( -1 \right)^{i}\!s^{k}  }{i!}\frac{\mathrm{d}^{u_1}}{\mathrm{d}s^{u_1}}\mathcal{L}_{E_k}\left( l^{-\al}s  \right)\!  \nn\\
                  &\times  \frac{\mathrm{d}^{u_2}}{\mathrm{d}s^{u_2}}\mathcal{L}_{I_{ \etv_{\mathrm{t}}}}\!\!\left(l^{-\al} s \right)\frac{\mathrm{d}^{u_3}}{\mathrm{d}s^{u_3}}\mathcal{L}_{I_{ \etv_{\mathrm{t}}}}\!\!\left(l^{-\al} s \right)\!   \frac{\mathrm{d}^{k}}{\mathrm{d}s^{k}} \mathcal{L}_{I_{l}}\!\!\left(s \right)\mathrm{d}l,\label{coverageprobability} \!
\end{align}
where  $l=\|x\|$ and  $s=\frac{{T}}{\sigma_{\hat{\bg}_{k}}^{2}}l^{a}$,   while $\mathcal{L}_{I_{ \etv_{\mathrm{r}}}}\!\left(s \right)$,   $\mathcal{L}_{I_{ \etv_{\mathrm{t}}}}\!\left(s \right)$, and $\mathcal{L}_{I_{l}}\!\left(s \right)$ are the Laplace transforms of the powers of the receive distortion, transmit distortion, and  interference  coming from other BSs.
\end{theorem}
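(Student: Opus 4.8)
The plan is to reduce the coverage probability, which is defined as an expectation of an indicator over a union of events across the BS point process, to a single spatial integral, and then to evaluate the resulting per-BS success probability by exploiting the Gamma structure of the desired signal power $Z_k[n]$. First I would apply the union bound to the definition: since $p_c=\EE[\mathds{1}(\cup_{x\in\Phi_{B}}\mathrm{SIR}(x)>T)]$, we have $p_c\le\EE[\sum_{x\in\Phi_{B}}\mathds{1}(\mathrm{SIR}(x)>T)]$. The Campbell--Mecke formula together with Slivnyak's theorem (so that, from the viewpoint of a typical serving BS placed at $x$, the remaining BSs still form a PPP of intensity $\lambda_{B}$) converts this sum over the process into $\lambda_{B}\int_{\mathbb{R}^2}\P[\mathrm{SIR}(x)>T]\,dx$. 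This supplies the prefactor $\lambda_{B}$ and the integral over $l=\|x\|$ in \eqref{coverageprobability}, and it is the sole origin of the inequality: the union bound is tight for $T\ge1$, where at most one BS can cover, and strict otherwise.

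For a fixed BS at distance $l$, I would rewrite the condition $\mathrm{SIR}(x)>T$ from \eqref{eq: general sum_rate} as $Z_k[n]>T\,W$, where $W=E_k[n]+I_{\etv_{\mathrm{t}}}[n]+I_{\etv_{\mathrm{r}}}[n]+I_{l}[n]\,l^{\alpha}$ aggregates the estimation-error, transmit-distortion, receive-distortion, and inter-cell interference contributions. Because $Z_k[n]$ is, by \eqref{eq PDF1 1}, a Gamma variable of integer shape $\Delta=M-K+1$ and scale $\sigma_{\hat{\bh}_{k}}^{2}$, its complementary CDF is the finite sum $\P[Z_k>z]=e^{-z/\sigma_{\hat{\bh}_{k}}^{2}}\sum_{i=0}^{\Delta-1}(z/\sigma_{\hat{\bh}_{k}}^{2})^{i}/i!$; conditioning on $W$ and setting $z=TW$ produces the outer sum $\sum_{i=0}^{\Delta-1}$. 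I would then expand the term $(TW/\sigma_{\hat{\bh}_{k}}^{2})^{i}$ in two stages: a binomial split that peels off the inter-cell term $I_{l}$, yielding $\binom{i}{k}$ and the power $s^{k}$ with $s$ as defined below \eqref{coverageprobability}, followed by a multinomial expansion distributing the remaining $i-k$ powers over $E_k$, $I_{\etv_{\mathrm{t}}}$, $I_{\etv_{\mathrm{r}}}$, yielding the coefficient $\binom{i-k}{u_1,u_2,u_3}$ and the triple sum over $u_1+u_2+u_3=i-k$. Treating the four components as mutually independent factorizes $\mathcal{L}_W$ and hence the expectation of the product, and each factor is evaluated through the identity $\EE[X^{j}e^{-\lambda X}]=(-1)^{j}\mathcal{L}_X^{(j)}(\lambda)$, which turns the $j$-th moment-weighted exponential into the $j$-th derivative of the Laplace transform. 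The common sign $(-1)^{i}$ and the factor $1/i!$ are collected across these applications, and the arguments $l^{-\alpha}s$ (for the three impairment terms, whose coefficient is $T/\sigma_{\hat{\bh}_{k}}^{2}$) and $s$ (for $I_{l}$, whose coefficient carries the extra $l^{\alpha}$) emerge after a chain-rule change of the differentiation variable to $s$. Finally I would insert the explicit Laplace transforms furnished by Proposition~\ref{LaplaceTransform} and Lemma~\ref{LaplaceTransformGamma}.

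I expect the main obstacle to be the bookkeeping in the differentiation step: keeping the multinomial coefficients, the sign $(-1)^{i}$, the powers of $s$ and $l^{-\alpha}$, and the chain-rule factors mutually consistent while the four Laplace-transform derivatives are assembled into the precise form of \eqref{coverageprobability}. A second, more conceptual point is justifying that $E_k$, $I_{\etv_{\mathrm{t}}}$, $I_{\etv_{\mathrm{r}}}$, and $I_{l}$ may be treated as independent so that $\mathcal{L}_W$ factorizes (invoking MMSE orthogonality of estimate and error, the independence of the inter-cell PPP, and the distributional characterizations recorded after Proposition~\ref{SINR}); and confirming that the Palm/Slivnyak reduction is legitimate for an SIR expression that conditions on the serving link at $x$ while the interference field is the reduced point process.
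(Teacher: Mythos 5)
Your proposal follows essentially the same route as the paper's proof: the union (Boole) bound on the coverage event, the Campbell--Mecke reduction to $\lambda_{B}\int_{\mathbb{R}^2}\P[\mathrm{SIR}>T\,|\,l]\,\mathrm{d}x$, the finite-sum complementary CDF of the Gamma-distributed $Z_k[n]$ with integer shape $\Delta$, the binomial split isolating $I_l$ followed by the multinomial expansion over $E_k$, $I_{\etv_{\mathrm{t}}}$, $I_{\etv_{\mathrm{r}}}$, the identity $\EE[X^{j}e^{-\lambda X}]=(-1)^{j}\mathcal{L}_X^{(j)}(\lambda)$, and the final substitution of Proposition~\ref{LaplaceTransform} and Lemma~\ref{LaplaceTransformGamma}. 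Your handling of the arguments ($l^{-\al}s$ for the impairment/error terms versus $s$ for $I_l$) and of the independence factorization matches the paper's argument, so the proposal is correct and not a distinct approach.
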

\begin{proof}
See Appendix~\ref{CoverageProbabilityproof}\footnote{Although the expressions [21]-[25] could be characterised in the asymptotic regime, this action would result in deterministic expressions that could not be manipulated statistically, in order to derive the coverage probability.}.
\end{proof}

\begin{proposition}\label{LaplaceTransform} 
The Laplace transform of the interference power of a  general cellular network with randomly distributed multiple antenna BSs having RATHIs and imperfect CSIT is given by
\begin{align}
&\mathcal{L}_{I_{l}}\!\left({s} \right)=\exp{\!\left( - {s}^{\frac{2}{a}} \mathcal{C}\left( \al, K\right) \right)},
\end{align}
where  $\mathcal{C}\left( \al, K\right)=\frac{2 \pi \lambda_{B}}{a} \sum_{m=0}^{K}\binom{K}{m}\mathrm{B}\left( K-m+\frac{2}{a},m-\frac{2}{a} \right)$.
\end{proposition}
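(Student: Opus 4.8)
The plan is to treat $\mathcal{L}_{I_{l}}(s)=\EE\!\left[\exp(-s\,I_{l}[n])\right]$ as an expectation over two independent sources of randomness: the Poisson field $\Phi_{B}$ of interfering BS locations and the i.i.d.\ fading powers $g_{lk}[n]\sim\Gamma(K,1)$ established in Proposition~\ref{SINR}. First I would write $I_{l}[n]=\sum_{l\in\Phi_{B}/x_{k,n}}g_{lk}[n]\|y\|^{-\al}$ and condition on $\Phi_{B}$, so that the fading factors decouple into a product over the points. Averaging a single factor over $g\sim\Gamma(K,1)$ uses the Gamma Laplace transform $\EE_{g}[e^{-tg}]=(1+t)^{-K}$ evaluated at $t=s\|y\|^{-\al}$, giving $\EE_{g}\!\left[e^{-s g\|y\|^{-\al}}\right]=(1+s\|y\|^{-\al})^{-K}$.

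Next I would apply the probability generating functional (PGFL) of the PPP, namely $\EE\!\left[\prod_{x\in\Phi_{B}}f(x)\right]=\exp\!\left(-\lambda_{B}\int_{\mathbb{R}^{2}}(1-f(x))\,\mathrm{d}x\right)$, with $f(y)=(1+s\|y\|^{-\al})^{-K}$. Passing to polar coordinates (the angular integral contributing a factor $2\pi$) reduces the exponent to $-2\pi\lambda_{B}\int_{0}^{\infty}\bigl(1-(1+s r^{-\al})^{-K}\bigr)\,r\,\mathrm{d}r$. This radial integral is the core quantity to evaluate.

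To obtain the closed form I would rewrite $1-(1+sr^{-\al})^{-K}=1-r^{\al K}/(r^{\al}+s)^{K}$ and expand the numerator $(r^{\al}+s)^{K}-r^{\al K}$ by the binomial theorem; the $m=0$ term cancels, leaving $\sum_{m=1}^{K}\binom{K}{m}s^{m}\,r^{\al(K-m)}/(r^{\al}+s)^{K}$. Each resulting integral $\int_{0}^{\infty}r^{\al(K-m)+1}/(r^{\al}+s)^{K}\,\mathrm{d}r$ is then handled by the substitution $u=r^{\al}/s$, which simultaneously factors out the common $s^{2/\al}$ and converts the integral into the standard Beta-function form $\int_{0}^{\infty}u^{x-1}(1+u)^{-(x+y)}\,\mathrm{d}u=\mathrm{B}(x,y)$ with $x=K-m+2/\al$ and $y=m-2/\al$. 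Collecting the terms yields $\exp\!\left(-s^{2/\al}\,\mathcal{C}(\al,K)\right)$ with $\mathcal{C}(\al,K)=\tfrac{2\pi\lambda_{B}}{\al}\sum_{m}\binom{K}{m}\mathrm{B}(K-m+2/\al,\,m-2/\al)$, matching the claimed expression.

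The main obstacle is the evaluation of this radial integral: one must carry out the binomial bookkeeping carefully, recognise that the substitution $u=r^{\al}/s$ is precisely what isolates the $s^{2/\al}$ scaling (the familiar self-similarity of SIR in PPP networks), and verify that every Beta function is well defined. The latter forces the convergence condition $\al>2$, which guarantees each second argument $m-2/\al>0$ for $m\ge1$ and each first argument $K-m+2/\al>0$; it is also the reason the $m=0$ term must be dropped, since $\mathrm{B}(\cdot,-2/\al)$ would otherwise be ill defined. I would also note that the derivation relies on the interfering fading powers being $\Gamma(K,1)$, a fact inherited from the ZF precoding and established in the proof of Proposition~\ref{SINR}.
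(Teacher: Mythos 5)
Your proof is correct and follows essentially the same route as the paper's: conditioning on the PPP, averaging the $\Gamma(K,1)$ interference fading via its Laplace transform, applying the PGFL, passing to polar coordinates, expanding with the binomial theorem, and reducing the radial integral to Beta functions through a power-law substitution (the paper uses $\left(1+r^{-\al}\right)^{-1}\rightarrow t$ where you use $u=r^{\al}/s$, which is immaterial). Your remark that the $m=0$ term must be dropped (since $\mathrm{B}\!\left(K+\frac{2}{\al},-\frac{2}{\al}\right)$ is ill defined) agrees with the paper's own derivation, whose final exponent sums over $m=1,\ldots,K$; the lower limit $m=0$ in the proposition statement is evidently a typo.
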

\begin{proof}
See Appendix~\ref{LaplaceTransformproof}.
\end{proof}
\begin{lemma}\label{LaplaceTransformGamma} 
 The Laplace transforms of the parts, describing the RATHIs $I_{ \etv_{\mathrm{t}}}$ and $I_{ \etv_{\mathrm{r}}}$ as well as the estimation error, are given by
 \begin{align}
\mathcal{L}_{I_{ \etv_{t}}}\!\left(s \right)&=\frac{1}{\left( 1+\kappa_{\mathrm{t}_{\mathrm{BS}}}^{2} s \right)^{\Delta}}, \\
\mathcal{L}_{I_{ \etv_{r}}}\!\left(s \right)&=\frac{1}{\left( 1+\kappa_{\mathrm{r}_{\mathrm{UE}}}^{2} s \right)^{M}},\\
\mathcal{L}_{E_k}\!\left(s \right)&=\frac{1}{\left( 1+\delta^{-2} \left(1+\kappa_{\mathrm{t}_{\mathrm{BS}}} ^{2}\right)\sigma_{\hat{\bee}_{k}}^{2} s \right)^{\Delta}}.
 \end{align}
\end{lemma}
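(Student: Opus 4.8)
The plan is to recognize all three quantities $I_{\etv_{\mathrm{t}}}$, $I_{\etv_{\mathrm{r}}}$, and $E_k$ as scaled Gamma random variables and then invoke the closed-form Laplace transform of a Gamma law. Recall that if $X \sim \Gamma(m,\theta)$ with integer shape $m$ and scale $\theta$, then $\mathcal{L}_X(s) = \EE[e^{-sX}] = (1+\theta s)^{-m}$; this single identity produces all three results once the correct shape and scale of each term are pinned down. The distributional characterizations needed here are essentially those recorded right after Proposition~\ref{SINR}, so the lemma reduces to feeding these into the Gamma transform formula while carefully tracking how the scale parameters are absorbed into the variable $s=(T/\sigma_{\hat{\bg}_k}^2)\,l^{\al}$ used in Theorem~\ref{theoremCoverageProbability}.

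First I would treat $I_{\etv_{\mathrm{t}}}[n] = \kappa_{\mathrm{t}_{\mathrm{BS}}}^2 |\hat{\bh}_k^{\H}[n-1]\hat{\bw}_k[n-1]|^2 = \kappa_{\mathrm{t}_{\mathrm{BS}}}^2 Z_k[n]$. Since the PDF of $Z_k[n]$ given in Proposition~\ref{SINR} is precisely that of a $\Gamma(\Delta,\sigma_{\hat{\bh}_k}^2)$ variable with $\Delta = M-K+1$ (the exponent $M-K$ on $z$ forces shape $M-K+1$), the transform formula yields $(1+\kappa_{\mathrm{t}_{\mathrm{BS}}}^2\sigma_{\hat{\bh}_k}^2 s)^{-\Delta}$, and the stated form $(1+\kappa_{\mathrm{t}_{\mathrm{BS}}}^2 s)^{-\Delta}$ follows after the $\sigma_{\hat{\bh}_k}^2$ factor is folded into the normalized argument $s$. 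The receive-distortion term $I_{\etv_{\mathrm{r}}}[n] = \kappa_{\mathrm{r}_{\mathrm{UE}}}^2\|\bh_k[n]\|^2$ is handled identically: $\|\bh_k[n]\|^2$ is a sum of $M$ squared magnitudes of i.i.d.\ unit-variance complex Gaussian entries, hence $\Gamma(M,1)$, so scaling by $\kappa_{\mathrm{r}_{\mathrm{UE}}}^2$ and applying the formula gives $(1+\kappa_{\mathrm{r}_{\mathrm{UE}}}^2 s)^{-M}$.

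The estimation-error term $E_k[n] = \delta^{-2}(1+\kappa_{\mathrm{t}_{\mathrm{BS}}}^2)\|\tilde{\bee}_k^{\H}[n]\hat{\bW}[n-1]\|^2$ requires the one genuinely nontrivial distributional fact: the projection of the isotropic Gaussian error $\tilde{\bee}_k[n] \sim \cC\cN(\b0,\sigma_{\hat{\bee}_k}^2\bI_M)$ onto the ZF precoder has squared norm distributed as $\Gamma(\Delta,\sigma_{\hat{\bee}_k}^2)$. I would establish this through the orthogonality structure of the pseudo-inverse ZF precoder developed in Appendix~\ref{SINRproof}: the ZF nulling constraint consumes $K-1$ spatial degrees of freedom, leaving $M-K+1=\Delta$ effective dimensions, so that $\|\tilde{\bee}_k^{\H}\hat{\bW}\|^2$ collapses to a sum of $\Delta$ independent exponential contributions of scale $\sigma_{\hat{\bee}_k}^2$. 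Multiplying by the deterministic constant $\delta^{-2}(1+\kappa_{\mathrm{t}_{\mathrm{BS}}}^2)$ rescales the scale parameter accordingly, and the Gamma transform formula then delivers the claimed expression.

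I expect the main obstacle to be exactly this last identification of the shape parameter $\Delta$ for the ZF-projected error term, together with the accompanying bookkeeping of which $\sigma^2$ and power factors are absorbed into $s$ versus left explicit in the scale. The Gaussianity and the independence between $\tilde{\bee}_k[n]$ and the delayed precoder $\hat{\bW}[n-1]$ are immediate, but correctly attributing $M-K+1$ effective degrees of freedom to the projection onto the ZF precoder rests on the precoder's orthonormalization in Appendix~\ref{SINRproof}; once that is in hand, every remaining step is a direct substitution into the Gamma Laplace transform.
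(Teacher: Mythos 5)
Your handling of the first two transforms is correct and is exactly the paper's route: $I_{\etv_{\mathrm{t}}}$ and $I_{\etv_{\mathrm{r}}}$ are scaled Gamma variables (shapes $\Delta$ and $M$, with scale factors $\kappa_{\mathrm{t}_{\mathrm{BS}}}^{2}$ and $\kappa_{\mathrm{r}_{\mathrm{UE}}}^{2}$, as recorded after Proposition~\ref{SINR} and in Appendix~\ref{SINRproof}), and the identity $\mathcal{L}_X(s)=(1+\theta s)^{-m}$ for $X\sim\Gamma(m,\theta)$ does the rest; your remark about the $\sigma_{\hat{\bh}_{k}}^{2}$ factor being absorbed into the normalised argument $s$ is also a fair reading of how the lemma is used in Theorem~\ref{theoremCoverageProbability}.

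The genuine gap is in your treatment of $E_k$. You justify the shape parameter $\Delta=M-K+1$ by arguing that the ZF nulling constraint ``consumes'' $K-1$ spatial degrees of freedom of the error. That degrees-of-freedom reduction applies only to the desired-signal term $|\hat{\bh}_{k}^{\H}[n-1]\hat{\bw}_{k}[n-1]|^{2}$, where the projected vector is the \emph{same} estimated channel from which the precoder is built: there, $\hat{\bw}_{k}$ is an isotropic direction inside the $(M-K+1)$-dimensional orthogonal complement of the other estimated channels, which is what produces $\Gamma(\Delta,\sigma_{\hat{\bh}_{k}}^{2})$. The error $\tilde{\bee}_{k}[n]$ is, by construction, \emph{independent} of the delayed precoder $\hat{\bW}[n-1]$, so no dimensions are consumed: conditioned on the precoder, each projection $|\tilde{\bee}_{k}^{\H}[n]\hat{\bw}_{i}[n-1]|^{2}$ onto a unit-norm column is exponential with mean $\sigma_{\hat{\bee}_{k}}^{2}$, and $\|\tilde{\bee}_{k}^{\H}[n]\hat{\bW}[n-1]\|^{2}$ is the sum of $K$ such terms, i.e.\ $\Gamma\left(K,\sigma_{\hat{\bee}_{k}}^{2}\right)$ --- which is precisely what the paper derives in Appendix~\ref{SINRproof}. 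Your argument agrees with the correct one only in the coincidental case $K=M-K+1$ (e.g.\ the paper's numerical setting $M=5$, $K=3$). Note that this also surfaces an inconsistency internal to the paper: the lemma displays the exponent $\Delta$ for $\mathcal{L}_{E_k}$, while the paper's own appendix proof yields shape $K$; the resolution is that the exponent should follow from the $\Gamma(K,\sigma_{\hat{\bee}_{k}}^{2})$ law, not from a nulling argument, so the step in your proposal that you flagged as ``the one genuinely nontrivial distributional fact'' is exactly the step that fails.
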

\begin{proof}
The first two  Laplace transforms are easily obtained, since $I_{ \etv_{\mathrm{t}}}$ and $I_{ \etv_{\mathrm{r}}}$ follow the scaled Gamma distributions with scaled parameters ${\kappa_{\mathrm{t}_{\mathrm{BS}}}^{2}}$ and ${\kappa_{\mathrm{r}_{\mathrm{UE}}}^{2}}$, as mentioned in Appendix~\ref{SINRproof}. In the same appendix, it is shown that the estimation error is a scaled Gamma distribution.
\end{proof}
\begin{remark}
The expressions corresponding to the Laplace transforms of the distortion noises reveal that  the downlink RATHIs show a different behaviour than the uplink impairments. Specifically, $\kappa^{2}_{\mathrm{t},\mathrm{BS}}$ has a greater impact than $\kappa^{2}_{\mathrm{r},\mathrm{UE}}$, since $M >\Delta$. As a result, the quality of the transmit BS impairments should be kept above a certain standard that would not allow significant distortion of the system performance. Moreover, the Laplace transform of the transmit impairments depends on both the number of BS antennas and users, while the Laplace transform of the receive impairments depends only on the number of BS antennas. In other words, the higher the number of users is, the smaller the impact of the transmit impairments becomes, since the corresponding Laplace transform increases. Furthermore, the Laplace transform of the estimation error depends on the channel aging by means of $\delta$ and both transmit and receive hardware impairments. In fact, the more severe the channel aging is by means of smaller $\delta$ (higher users' velocity), the smaller  $\mathcal{L}_{E_k}\!\left(s \right)$ becomes.
\end{remark}
\begin{lemma}
The PDF of    $g_{lk,n}$, describing the interfering marks, is $\Gamma (K,1)$ distributed.
\end{lemma}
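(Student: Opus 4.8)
The plan is to recognise $g_{lk,n}$ as the squared norm of the projection of an isotropic complex Gaussian vector onto the $K$ columns of the interfering precoder, and then to exploit independence together with rotational invariance to identify this squared norm with a sum of $K$ i.i.d.\ unit-mean exponentials. Concretely, from Proposition~\ref{SINR} the interfering mark is $g_{lk,n}=\big|\bg_{lk}^{\H}[n]\hat{\bW}_{l}[n]\big|^{2}=\sum_{j=1}^{K}\big|\bg_{lk}^{\H}[n]\hat{\bw}_{l,j}[n]\big|^{2}$, where $\hat{\bw}_{l,j}[n]$ is the $j$th column of the normalised ZF precoder $\hat{\bW}_{l}[n]$ of the interfering BS $l$, and $\bg_{lk}[n]\sim\cC\cN(\b0,\mathbf{I}_{M})$ is the link from BS $l$ to the typical user. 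Since the assertion is purely distributional, it suffices to determine the law of this quadratic form.

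First I would establish the decisive independence: the precoder $\hat{\bW}_{l}[n]$ is constructed from the estimated channels between BS $l$ and its own $K$ served users, whereas $\bg_{lk}[n]$ describes the channel from BS $l$ to the typical user of the home cell, who is \emph{not} served by BS $l$. As these are drawn from independent fading processes, $\hat{\bW}_{l}[n]$ and $\bg_{lk}[n]$ are independent. I would therefore condition on $\hat{\bW}_{l}[n]$, treating it as a fixed $M\times K$ matrix. Conditioned on this, the linear-transformation property of the Gaussian distribution gives that $\bg_{lk}^{\H}[n]\hat{\bW}_{l}[n]$ is a zero-mean complex Gaussian row vector with covariance $\hat{\bW}_{l}^{\H}[n]\hat{\bW}_{l}[n]$.

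The final step uses the normalisation of the precoder. Because the ZF precoder is normalised so that its columns carry unit power and are mutually orthogonal, i.e.\ the Gram matrix $\hat{\bW}_{l}^{\H}[n]\hat{\bW}_{l}[n]=\mathbf{I}_{K}$, the conditional law reduces to $\bg_{lk}^{\H}[n]\hat{\bW}_{l}[n]\sim\cC\cN(\b0,\mathbf{I}_{K})$; equivalently, by rotational invariance of the isotropic Gaussian $\bg_{lk}[n]$, the $K$ projections $\bg_{lk}^{\H}[n]\hat{\bw}_{l,j}[n]$ are i.i.d.\ $\cC\cN(0,1)$ scalars. Each $\big|\bg_{lk}^{\H}[n]\hat{\bw}_{l,j}[n]\big|^{2}$ is then a unit-mean exponential, i.e.\ $\Gamma(1,1)$, so their sum over the $K$ streams is $\Gamma(K,1)$. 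Since this conditional distribution does not depend on the realisation of $\hat{\bW}_{l}[n]$, the unconditional law of $g_{lk,n}$ is also $\Gamma(K,1)$, completing the argument.

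I expect the main obstacle to be the orthonormality invoked in the last step: the raw ZF precoder $\bar{\bH}(\bar{\bH}^{\H}\bar{\bH})^{-1}$ does not have orthonormal columns, so one must argue that after the per-column power normalisation underlying $\EE[\tr(\hat{\bW}[n]\hat{\bW}^{\H}[n])]=1$ the effective beamforming directions seen by an independent interfered user may be taken isotropic and mutually orthogonal, or otherwise justify that the relevant Gram matrix acts as the identity for this distributional computation. This is the only place where a modelling assumption, rather than an elementary manipulation, enters; everything else follows from the independence of $\bg_{lk}[n]$ and $\hat{\bW}_{l}[n]$ and the rotational invariance of the complex Gaussian.
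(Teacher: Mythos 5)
Your proof takes essentially the same route as the paper's: independence of the interfering precoder $\hat{\bW}_{l}[n]$ from $\bg_{lk}[n]$, the Gaussian projection argument, and the identification of $g_{lk,n}$ with a sum of $K$ unit-mean exponentials. You are in fact more careful than the paper, whose proof concludes that $g_{lk,n}$ is a sum of $K$ \emph{independent} unit-variance complex-normal squared moduli using only the facts that the columns are unit-norm and independent of the channel --- precisely the orthonormality gap you flag at the end: independence of the $K$ projections requires the Gram matrix $\hat{\bW}_{l}^{\H}[n]\hat{\bW}_{l}[n]$ to equal $\mathbf{I}_{K}$, which a column-normalised ZF precoder does not satisfy exactly, so both arguments rest on the same (unstated, in the paper) modelling approximation.
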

\begin{proof}
Given that the precoding matrices $\hat{\bW_{l}}[n] $ have unit-norm columns and each instance does not depend on $\bg_{lk}[n]$ and $\bar{\bg}_{lk}[n]$ (the normalised version of $\bg_{lk}[n]$),  $\hat{\bW_{l}}[n] $  are independent isotropic
unit-norm random vectors. Hence,  the  sum of $\bg_{lk}[n]$ expresses the squared modulus of a linear combination of $K$ complex normal random variables, i.e., $g_{lk,n}\sim \Gamma (K,1)$.
\end{proof}
\begin{corollary}
When $M=K$, i.e., in the special case of full  SDMA, the upper bound of the coverage probability with RATHIs and channel aging, described by Theorem~\ref{theoremCoverageProbability}, is given by
\begin{align}
\! \!\!p_{c}\left(  T,\lambda_{B},\al,\delta,q \right)\!\le\!\!\lambda_{B}\!\!\int_{l \in  \mathds{R}^2}\!\!& \mathcal{L}_{E_k}\left( l^{-\al}s  \right)\! \mathcal{L}_{I_{ \etv_{\mathrm{t}}}}\!\!\left(l^{-\al} s \right) \nn\\ 
&\times \mathcal{L}_{I_{ \etv_{\mathrm{r}}}}\!\!\left(l^{-\al} s \right)\!    \mathcal{L}_{I_{l}}\!\!\left(s \right)\mathrm{d}l.\!\!
\end{align}
\end{corollary}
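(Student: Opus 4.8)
The plan is to obtain the corollary as a direct specialization of Theorem~\ref{theoremCoverageProbability}, the key observation being that the full-SDMA condition $M=K$ forces the shape parameter appearing in the Gamma-type Laplace transforms to collapse to one. First I would pin down this shape parameter. From the PDF in~\eqref{eq PDF1 1}, the desired signal power obeys $Z_{k}[n]\sim\Gamma(M-K+1,\sigma_{\hat{\bh}_{k}}^{2})$, since matching $z^{M-K}e^{-z/\sigma_{\hat{\bh}_{k}}^{2}}$ against the Gamma density identifies the shape as $M-K+1$ and the scale as $\sigma_{\hat{\bh}_{k}}^{2}$. Hence the integer $\Delta$ that controls both the number of summands in~\eqref{coverageprobability} and the exponents in Lemma~\ref{LaplaceTransformGamma} is $\Delta=M-K+1$, so setting $M=K$ gives $\Delta=1$.

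Next I would substitute $\Delta=1$ into~\eqref{coverageprobability} and track the degeneracy of the nested sums. The outer index runs over $i=0,\dots,\Delta-1$, which now contributes only $i=0$. For $i=0$ the inner sum over $k$ ranges over $k=0,\dots,i$, forcing $k=0$, and the composition constraint $u_1+u_2+u_3=i-k$ becomes $u_1+u_2+u_3=0$, so $u_1=u_2=u_3=0$ is the unique admissible choice. The surviving combinatorial factors are then all trivial: $\binom{i}{k}=\binom{0}{0}=1$, $\binom{i-k}{u_1+u_2+u_3}=\binom{0}{0}=1$, and $(-1)^{i}s^{k}/i!=1$. Likewise each differential operator $\mathrm{d}^{u_j}/\mathrm{d}s^{u_j}$ and $\mathrm{d}^{k}/\mathrm{d}s^{k}$ reduces to the zeroth derivative, i.e.\ the identity, leaving every Laplace transform untouched.

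Carrying out these substitutions leaves a single product of the four Laplace transforms, evaluated at the stated arguments $\mathcal{L}_{E_k}(l^{-\al}s)$, $\mathcal{L}_{I_{ \etv_{\mathrm{t}}}}(l^{-\al}s)$, the receive-distortion transform $\mathcal{L}_{I_{ \etv_{\mathrm{r}}}}(l^{-\al}s)$, and $\mathcal{L}_{I_{l}}(s)$, integrated against $\lambda_{B}$ over $l\in\mathds{R}^2$, which is exactly the claimed expression. The only point requiring care is the bookkeeping of which distortion term carries which derivative index: the third Laplace factor in~\eqref{coverageprobability} (the one associated with $u_3$) is to be read as the receive-distortion transform $\mathcal{L}_{I_{ \etv_{\mathrm{r}}}}$, so that at $u_2=u_3=0$ the product correctly reproduces both $\mathcal{L}_{I_{ \etv_{\mathrm{t}}}}$ and $\mathcal{L}_{I_{ \etv_{\mathrm{r}}}}$. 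I do not expect any genuine obstacle here; the argument is purely a degeneracy of the general formula at $\Delta=1$, and no new estimate or probabilistic computation is needed beyond the Laplace transforms already furnished by Proposition~\ref{LaplaceTransform} and Lemma~\ref{LaplaceTransformGamma}.
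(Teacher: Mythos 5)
Your proof is correct and is exactly the argument the paper leaves implicit: the corollary is the direct specialization of Theorem~\ref{theoremCoverageProbability} at $\Delta=M-K+1=1$, under which the sums over $i$, $k$, and $(u_1,u_2,u_3)$ all collapse to their zero terms and every derivative becomes the identity. Your remark that the repeated $\mathcal{L}_{I_{\etv_{\mathrm{t}}}}$ factor in the theorem's display is a typo for $\mathcal{L}_{I_{\etv_{\mathrm{r}}}}$ is also right, as confirmed by the theorem's own list of transforms and by Theorem~\ref{AverageAchievableRate}.
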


\subsection{Average Achievable Rate}\label{AverageAchievableRate1} 
This subsection presents the mean achievable data rate for a typical user under the  proposed general system model with RATHIs and imperfect CSIT. Given that the analysis and some definitions are similar to Section~\ref{coverage}, the presentation is more concise.  Actually, the following theorem is one of the main results of this paper, being unique in the research area of practical systems with hardware impairments, when the BSs are randomly positioned.

\begin{theorem}\label{AverageAchievableRate}
 The downlink average achievable user  rate of a  HCN in the presence of RATHIs and imperfect CSIT due to pilot contamination and channel aging is
\begin{align}
 &R_{k}\left( T,\lambda_{B},\al,\delta, q \right)=\int_{\!x \in  \mathds{R}^2}\!\! \int_{\!t>0} \!\sum_{i=0}^{\Delta\!-\!1}\!\sum_{k=0}^{i}\!\sum_{u_1+u_2+u_3=i-k}\!\!\!\binom{\!i\!}{\!k\!}\!\nn\\
 &\times\!\binom{\!i\!-\!k\!}{\!u_1\!+\!u_2\!\!+u_3\!}\!\frac{\left(\! -1\! \right)^{i}s^{k}\!}  {i!}\!\frac{\mathrm{d}^{u_1}}{\mathrm{d}s^{u_1}}\mathcal{L}_{E_k}\left( l^{-\al }s \left(e^{t}-1  \right) \right) \nn\\
 &\times\frac{\mathrm{d}^{u_2}}{\mathrm{d}s^{u_2}}\mathcal{L}_{I_{ \etv_{\mathrm{t}}}}\!\!\left(l^{-\al}s\left(e^{t}-1  \right) \right)\! 
                 \frac{\mathrm{d}^{u_3}}{\mathrm{d}s^{u_3}}\mathcal{L}_{I_{ \etv_{\mathrm{r}}}}\!\!\left(l^{-\al} s\left(e^{t}-1  \right) \right)\!\nn\\
 &\times   \frac{\mathrm{d}^{k}}{\mathrm{d}s^{k}} \mathcal{L}_{I_{l}}\!\!\left(s\left(e^{t}-1  \right) \right)\!
\mathrm{d}t\mathrm{d}l,
 \end{align}
 where the various parameters are also defined in Theorem~\ref{theoremCoverageProbability}.
\end{theorem}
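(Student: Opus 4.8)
The plan is to reduce the mean-rate computation to the SIR tail probability that already underlies Theorem~\ref{theoremCoverageProbability}, and then to integrate it against the rate variable. First I would write the ergodic rate as $R_{k}=\EE[\ln(1+\gamma_{k})]$ and apply the standard layer-cake identity for a nonnegative random variable,
\begin{align}
R_{k}=\EE\!\left[\ln\!\left(1+\gamma_{k}\right)\right]=\int_{t>0}\P[\gamma_{k}>e^{t}-1]\,\mathrm{d}t, \nonumber
\end{align}
which expresses the rate as an integral over $t$ of the SIR complementary CDF evaluated at the threshold $e^{t}-1$. The key observation is that $\P[\gamma_{k}>e^{t}-1]$ is exactly the spatially-averaged quantity computed in Theorem~\ref{theoremCoverageProbability}, but with the SIR target $T$ replaced by $e^{t}-1$; this substitution is precisely what attaches the factor $(e^{t}-1)$ to every Laplace-transform argument in the claimed expression.

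Next I would carry the generic threshold $e^{t}-1$ through the same chain of steps that establishes Theorem~\ref{theoremCoverageProbability}. Concretely: (i) condition on the serving-BS distance $l=\|x\|$ and on all distortion and interference terms, and use that the desired power $Z_{k}[n]$ follows a Gamma law with integer shape $\Delta$ (Proposition~\ref{SINR}), whose tail is the finite sum $e^{-z/\sigma_{\hat{\bh}_{k}}^{2}}\sum_{i=0}^{\Delta-1}(z/\sigma_{\hat{\bh}_{k}}^{2})^{i}/i!$; (ii) expand the $i$-th power of the aggregate term $E_{k}+I_{\etv_{\mathrm{t}}}+I_{\etv_{\mathrm{r}}}+I_{l}\,l^{\al}$ by the binomial and multinomial theorems, which generates the inner sums over $k$ and over $u_{1}+u_{2}+u_{3}=i-k$ together with the displayed binomial coefficients; (iii) exploit the mutual independence of $E_{k}$, $I_{\etv_{\mathrm{t}}}$, $I_{\etv_{\mathrm{r}}}$ and $I_{l}$ to factorise the conditional expectation and turn each moment-weighted exponential $\EE[X^{m}e^{-sX}]$ into $(-1)^{m}\,\mathrm{d}^{m}\mathcal{L}_{X}/\mathrm{d}s^{m}$, inserting the Laplace transforms supplied by Proposition~\ref{LaplaceTransform} and Lemma~\ref{LaplaceTransformGamma}; and (iv) average over the PPP $\Phi_{B}$ via Campbell's theorem, which produces the spatial factor $\lambda_{B}\int_{l\in\mathds{R}^{2}}$. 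Because all of these steps are agnostic to the numerical value of the threshold, they reproduce the integrand of Theorem~\ref{theoremCoverageProbability} with the replacement $s\mapsto s(e^{t}-1)$ inside the Laplace transforms.

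Finally I would substitute this integrand back under $\int_{t>0}$ and interchange the $t$-integral, the spatial integral, and the expectation to reach the stated double integral over $x\in\mathds{R}^{2}$ and $t>0$. Since every factor is nonnegative, Tonelli's theorem legitimises the interchange, so the only genuinely new analytic content beyond Theorem~\ref{theoremCoverageProbability} lies in this representation and rearrangement. I expect the main obstacle to be the convergence bookkeeping of the $t$-integral: one must verify that the SIR tail decays fast enough as $t\to\infty$ (and is integrable as $t\to0^{+}$) for the rate integral to be finite, which is where the $(e^{t}-1)$ scaling inside $\mathcal{L}_{I_{l}}$, $\mathcal{L}_{E_{k}}$, $\mathcal{L}_{I_{\etv_{\mathrm{t}}}}$ and $\mathcal{L}_{I_{\etv_{\mathrm{r}}}}$ must be controlled. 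A secondary subtlety is that Theorem~\ref{theoremCoverageProbability} furnishes an upper bound on the coverage probability, so integrating it yields an upper bound on $R_{k}$; the displayed equality should therefore be read in the same (upper-bound) sense.
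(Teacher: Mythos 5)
Your proposal follows essentially the same route as the paper's own proof: the paper likewise writes $R_{k}=\EE[\ln(1+\mathrm{SIR})]$, applies the identity $\EE[x]=\int_{t>0}\P(x>t)\,\mathrm{d}t$ to reduce the rate to the SIR tail at threshold $e^{t}-1$, and then reruns the binomial/multinomial expansion and Laplace-transform factorisation from Theorem~\ref{theoremCoverageProbability} with $s\mapsto s(e^{t}-1)$ before integrating over $t$ and the spatial domain. Your closing remark that the result inherits the upper-bound character of the coverage expression (rather than a true equality) is a fair observation that the paper glosses over, but it does not change the substance of the argument.
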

\begin{proof}
See Appendix~\ref{AverageAchievableRateproof}.
\end{proof}
 \section{Numerical Results}  \label{Numerical} 
This section aims at investigating the impact of the various parameters such as number of users and antennas on the general expressions corresponding to the coverage probability and the user rates provided by Theorems~\ref{theoremCoverageProbability} and~\ref{AverageAchievableRate}, respectively.
Given that the typical user lies at the origin, we choose a sufficiently large area of $50~\mathrm{km}\times 50$ $\mathrm{km}$, where  the locations of the BSs are simulated as realisations of a  PPP with given density $\lambda_{B}=0.01$. The users' PPP density is considered to be $\lambda_{k}=60 \lambda_B$ as in~\cite{Bai2016}.

Coverage of the user means that the received SIR from at least one of the BSs  exceeds a certain target. We are interested  in the calculation of the desired signal strength and interference power. Note that the received SIR is obtained from each BS. The desired estimate of the coverage probability comes by repeating this
scheme an adequate number of times. Thus, this procedure allows us not only to validate our model, but also to demonstrate the effect of the various parameters on the coverage probability. 

Herein, we consider a setup, where the number of antennas per BS and and the number of users are $M=5$ and $K=3$, respectively. The path-loss is set to $\al=3$, while the uplink and downlink transmit powers are $\rho^{\mathrm{UE}}_{\mathrm{up}}=5~\mathrm{dB}$ and $p=15~\mathrm{dB}$. Note that  $\tau=K$, hence $p_{\mathrm{tr}}=15~\mathrm{dB}$. Moreover, we assume that the distance between the user and the BS during the uplink phase is $\| x_{k}\|$=15~$\mathrm{m}$, and it is known by the BS.   In  the figures,  the proposed analytical expressions of the coverage probability $p_{c}\left( T,\lambda_{B},\al,\delta, q \right)$ and the achievable user rate $R_{k}\left( T,\lambda_{B},\al,\delta, q \right)$ are plotted  along with the corresponding simulated results. The ``dot'' and ``solid'' lines illustrate the analytical results with specific RATHIs/user mobility and no transceiver impairments or no relative user movement. In a similar way, the bullets designate the simulation results. Obviously, the inevitable presence of RATHIs or the time variation of the channel result in the worsening of the system performance. Actually, the more severe these effects are, the higher the degradation of the system performance becomes.
\subsection{Impact of RATHIs} 
In order to investigate how the RATHIs affect the coverage probability, we assume no user mobility, i.e., $\delta=0$. First, we focus on the uplink RATHIs $\kappa_{\mathrm{r}_{\mathrm{BS}}}^{2}$ and $\kappa_{\mathrm{t}_{\mathrm{UE}}}^{2}$, while we have assumed that the downlink RATHIs are zero. Specifically, in Fig~\ref{Fig5}, we show the variation of the coverage probability for different values of uplink hardware impairments. Based on this model, the transmit impairment of the user and the receive impairment at the BS contribute the same at the coverage probability, which is in agreement with theory. For example, we set $\kappa_{\mathrm{t}_{\mathrm{UE}}}^{2}=0.08$ and $\kappa_{\mathrm{r}_{\mathrm{BS}}}^{2}=0$  as well as $\kappa_{\mathrm{t}_{\mathrm{UE}}}^{2}=0$ and $\kappa_{\mathrm{r}_{\mathrm{BS}}}^{2}=0.08$ and the coverage probability does not change. Thus, as a design plan, we could keep $\kappa_{\mathrm{t}_{\mathrm{UE}}}^{2}$ constant and play with the quality of the BS hardware for obtaining a certain coverage probability.
 \begin{figure}[!h]
 \begin{center}
 \includegraphics[width=\linewidth]{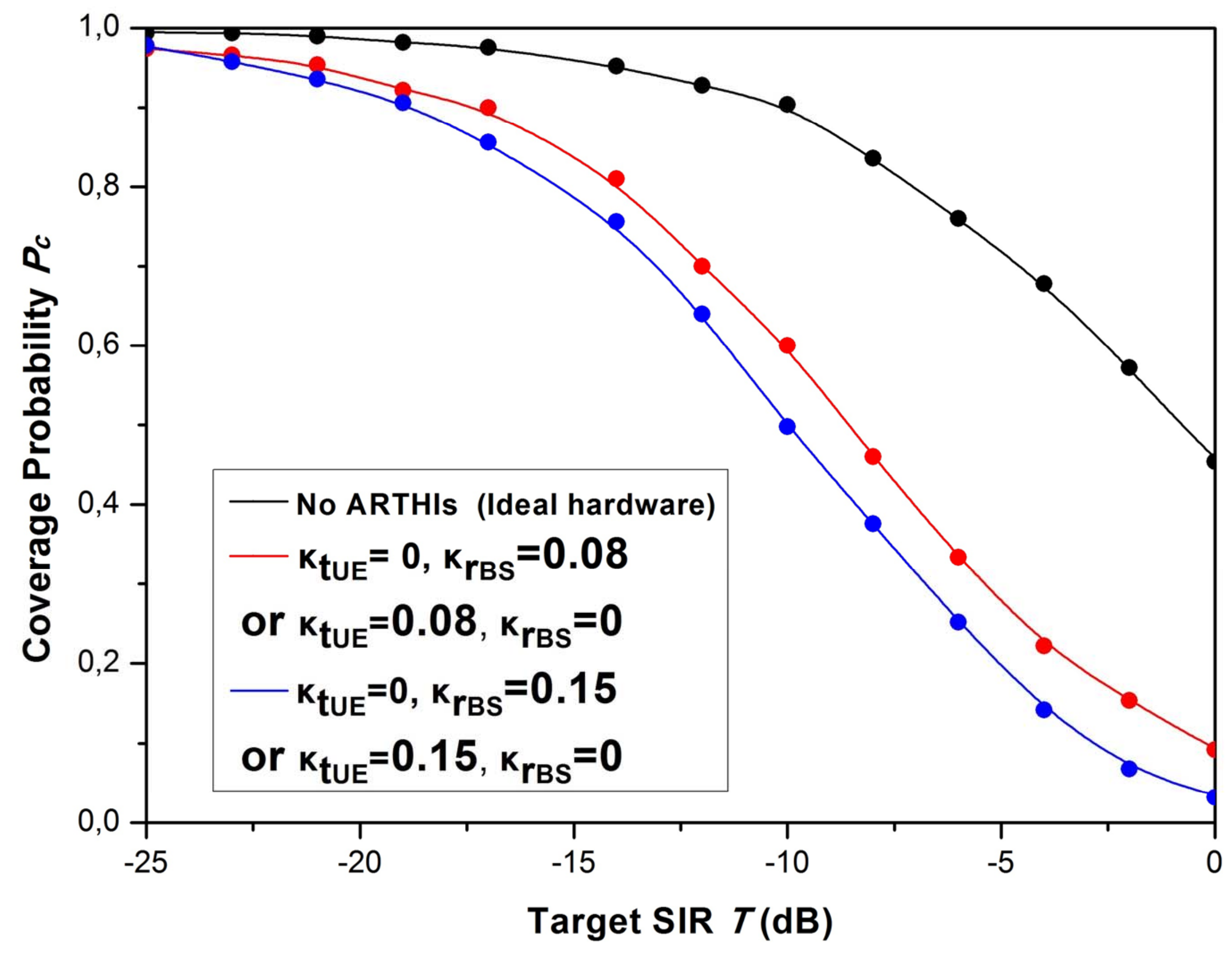}
 \caption{\footnotesize{Coverage probability of a multiple-antenna HCN for varying uplink RATHIs versus $\rho$  ($M=5$, $K=3$, $\al=3$, $\kappa_{\mathrm{t}_{\mathrm{BS}}}=\kappa_{\mathrm{r}_{\mathrm{UE}}}=0$,  $\delta=0$).}}
 \label{Fig5}
 \end{center}
 \end{figure}
  \begin{figure}[!h]
 \begin{center}
 \includegraphics[width=\linewidth]{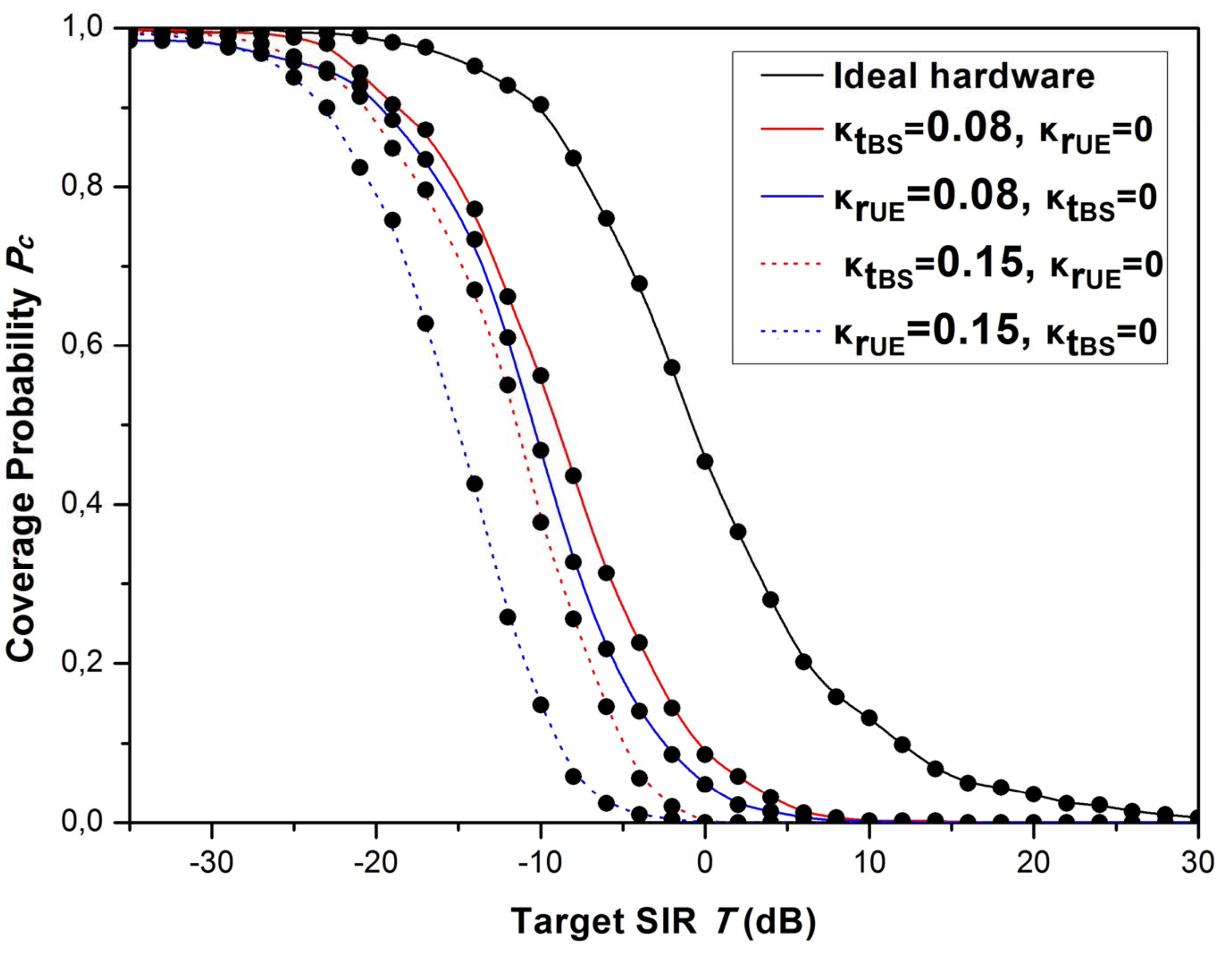}
 \caption{\footnotesize{Coverage probability of a multiple-antenna HCN for varying downlink RATHIs versus $\rho$  ($M=5$, $K=3$, $\al=3$, $\kappa_{\mathrm{t}_{\mathrm{UE}}}=\kappa_{\mathrm{r}_{\mathrm{BS}}}=0$,  $\delta=0$).}}
 \label{Fig2}
 \end{center}
 \end{figure} 
\begin{figure}[!h]
 \begin{center}
 \includegraphics[width=\linewidth]{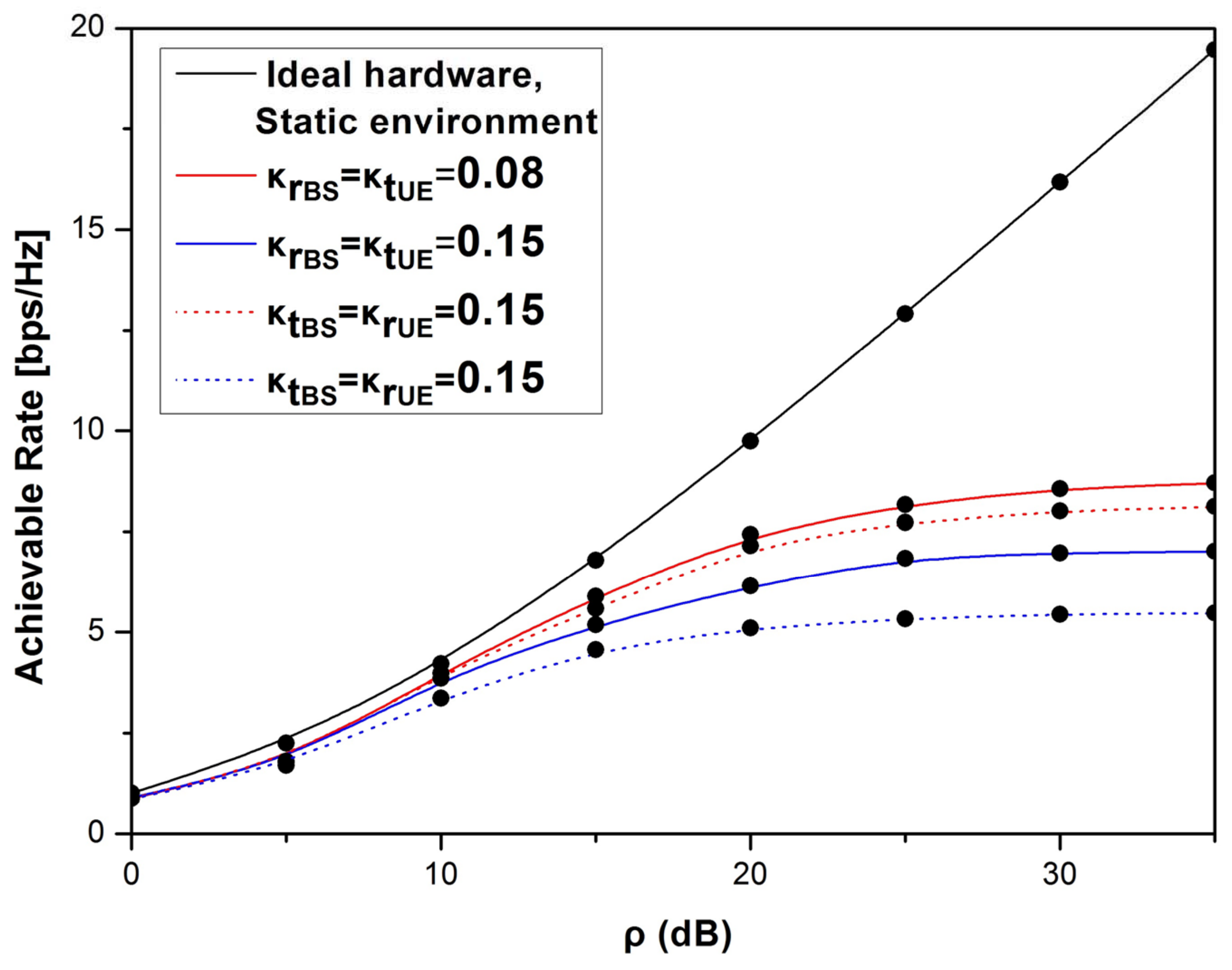}
 \caption{\footnotesize{Achievable user rate of a multiple-antenna HCN for varying uplink and downlink RATHIs versus $\rho$  ($M=5$, $K=3$, $\al=3$,   $\delta=0$).)}}
 \label{Fig1}
 \end{center}
 \end{figure} \begin{figure}[!h]
 \begin{center}
 \includegraphics[width=\linewidth]{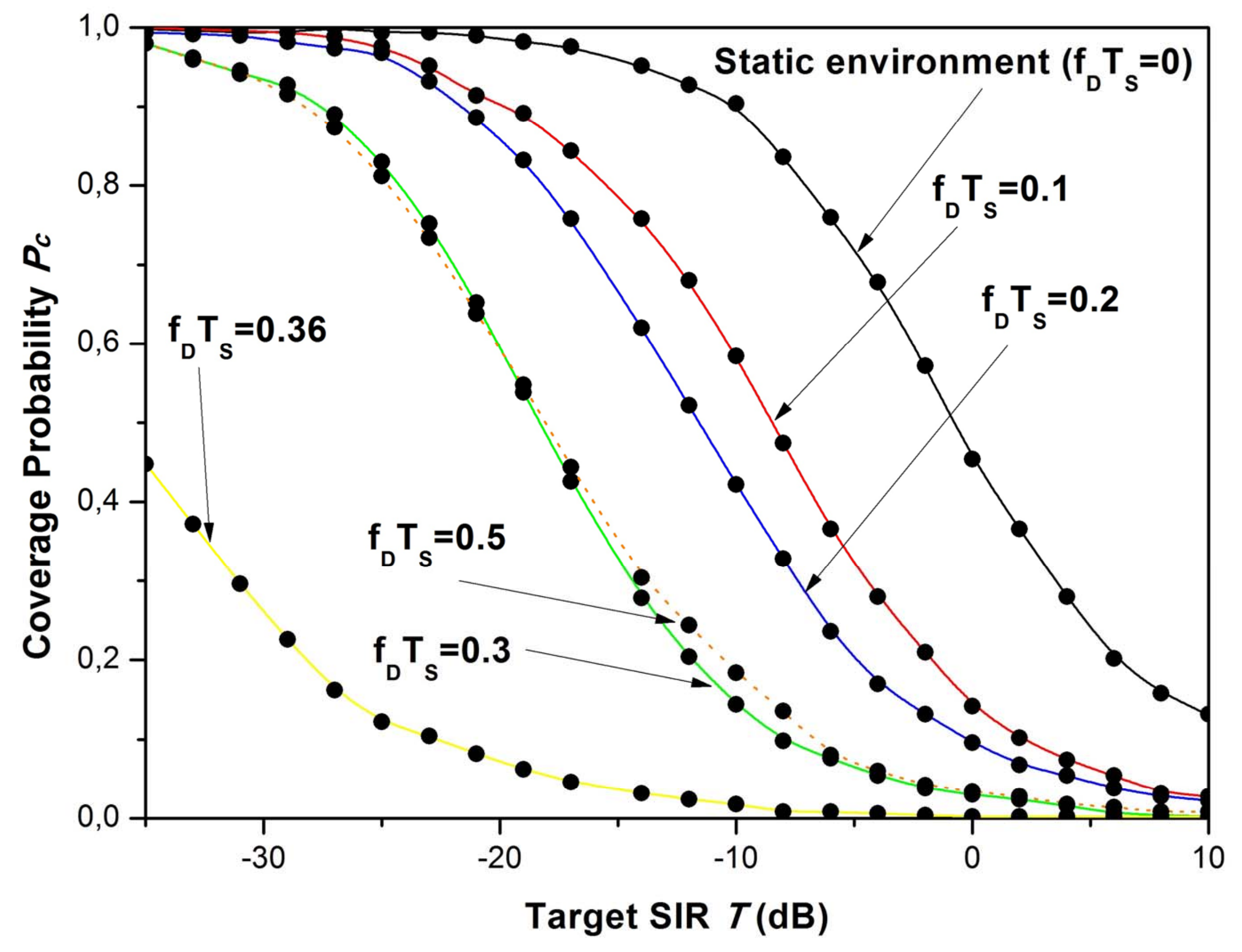}
 \caption{\footnotesize{Achievable user rate of a HCN versus $\rho$   for varying user mobility ($M=5$, $K=3$, $\al=3$, $\kappa_{\mathrm{t}_{\mathrm{UE}}}=\kappa_{\mathrm{r}_{\mathrm{BS}}}=\kappa_{\mathrm{r}_{\mathrm{UE}}}=\kappa_{\mathrm{r}_{\mathrm{UE}}}=0$,  $\delta=0$).}}
 \label{Fig3}
 \end{center}
 \end{figure} \begin{figure}[!h]
 \begin{center}
 \includegraphics[width=\linewidth]{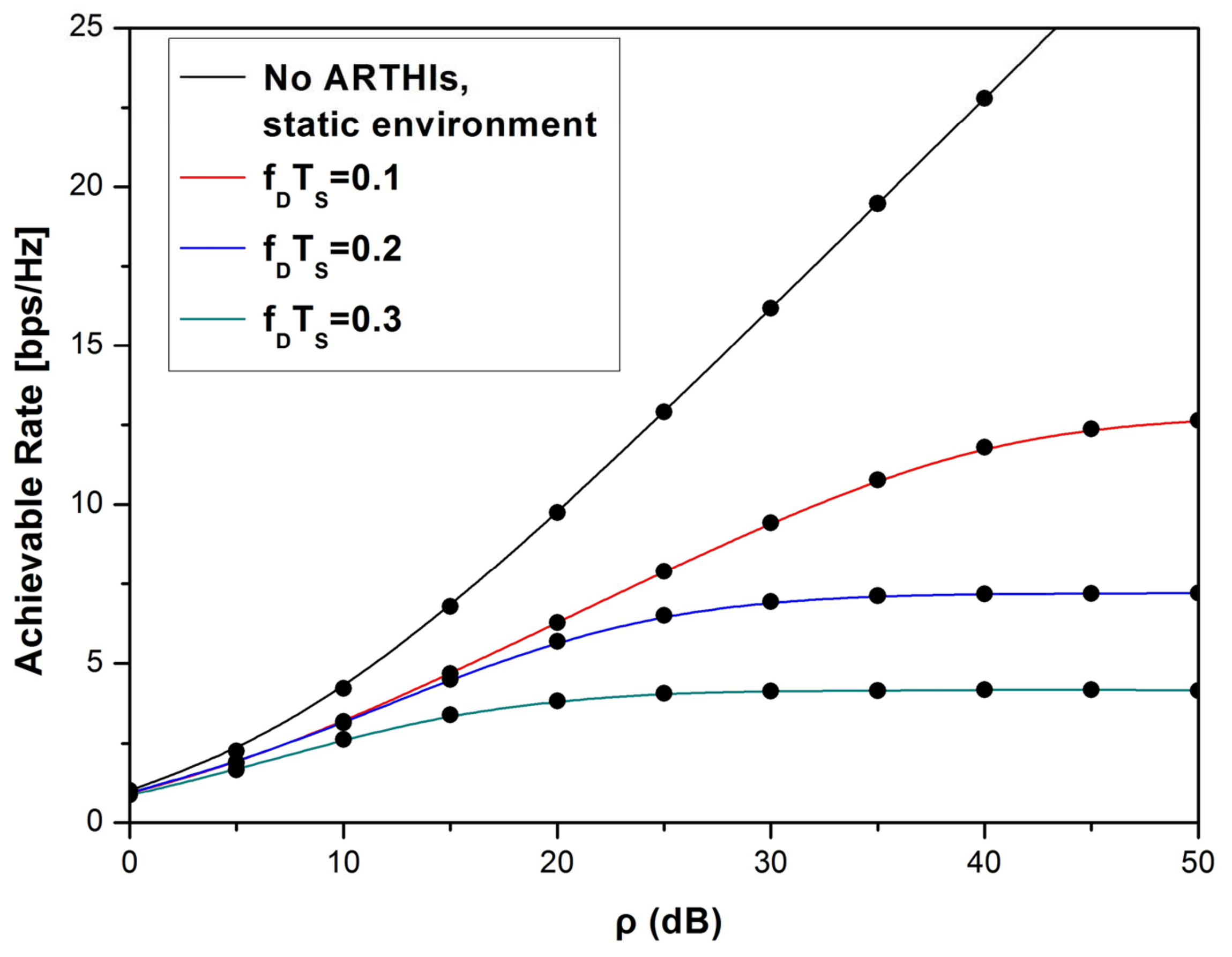}
 \caption{\footnotesize{Coverage probability of a multiple-antenna HCN versus $\rho$   for varying Doppler shift ($M=5$, $K=3$, $\al=3$, $\kappa_{\mathrm{t}_{\mathrm{UE}}}=\kappa_{\mathrm{r}_{\mathrm{BS}}}=\kappa_{\mathrm{r}_{\mathrm{UE}}}=\kappa_{\mathrm{r}_{\mathrm{UE}}}=0$,  $\delta=0$). }}
 \label{Fig4}
 \end{center}
 \end{figure} 
  \begin{figure}[!h]
 \begin{center}
 \includegraphics[width=\linewidth]{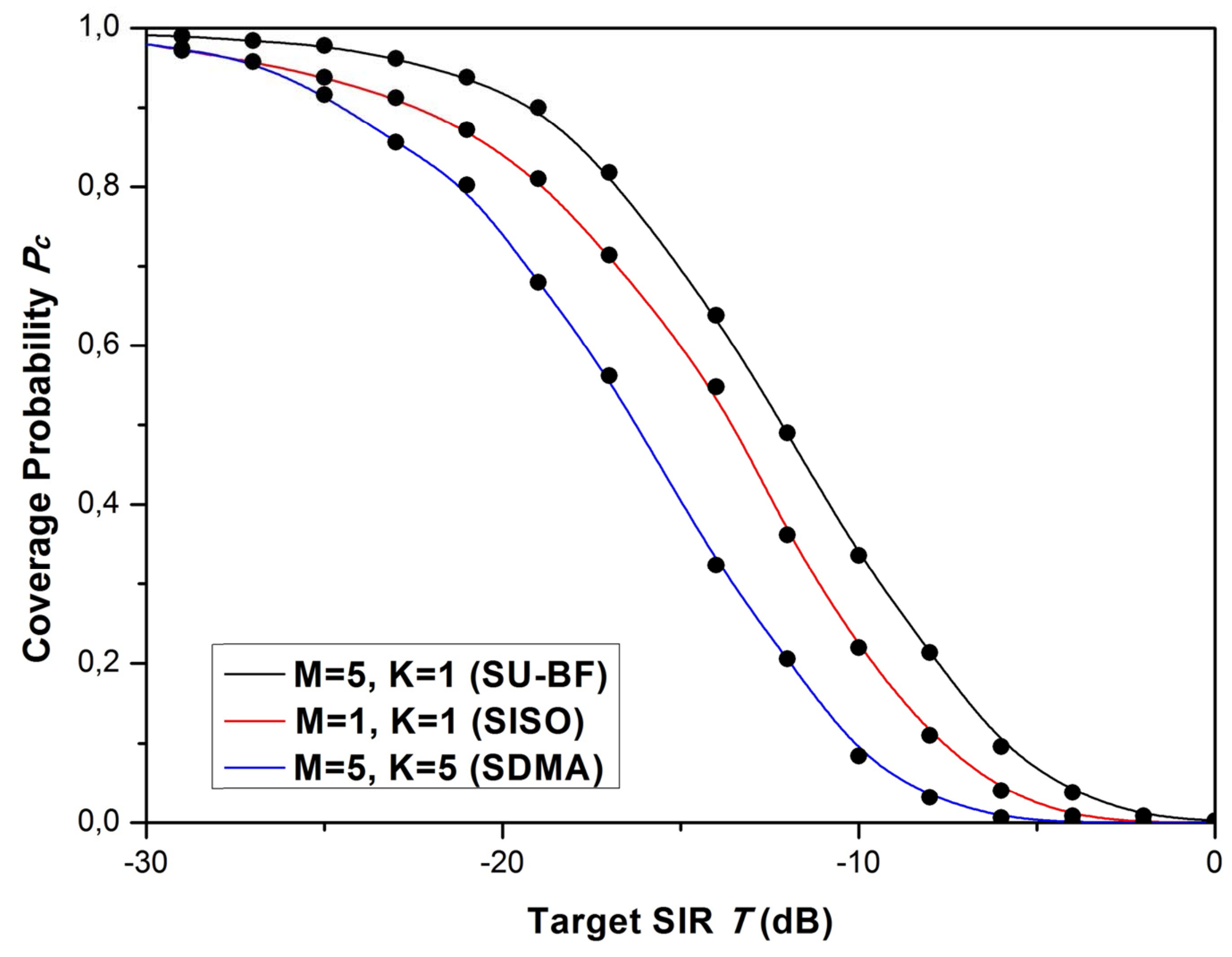}
 \caption{\footnotesize{Coverage probability of a multiple-antenna HCN versus $\rho$ for various
combinations of multi-antenna techniques   ($\al=3$, $\kappa_{\mathrm{t}_{\mathrm{UE}}}=\kappa_{\mathrm{r}_{\mathrm{BS}}}=\kappa_{\mathrm{r}_{\mathrm{UE}}}=\kappa_{\mathrm{t}_{\mathrm{BS}}}=0.08$,  $\delta=0.1$). }}
 \label{Fig6}
 \end{center}
 \end{figure} 
 
In Fig.~\ref{Fig2}, we plot the coverage probability as a function of the target SIR for different values of the downlink RATHIs. Clearly, here the transmit hardware impairment  $\kappa_{\mathrm{t}_{\mathrm{BS}}}$ exhibits higher impact on $p_{c}\left( q \right)$ than $\kappa_{\mathrm{r}_{\mathrm{UE}}}$, which coincides with the theoretical observations.  For this reason, we should be more careful with the quality of the transmit impairments at the BS.  These nominal values
of RATHIs are quite reasonable according to~\cite{Bjornson2014}. 
Moreover, in the same figure, we have depicted the simulated and
theoretical results corresponding to ideal hardware for the sake of comparison. Similar conclusions can be extracted by observing the rate versus $\rho$ for varying downlink RATHIs in Fig~\ref{Fig1}.
\subsection{Impact of User Mobility}
We study the effect of time-variation of the channel due to user mobility on the coverage
probability of a cellular network in Fig.~\ref{Fig3}.
These observations are consistent with the coverage probability and user rate results derived in Sections~\ref{coverage} and~\ref{AverageAchievableRate1}, respectively. Note that in order to focus only on the effect of time-variation of the channel, we assume that the hardware is ideal, i.e.,  both in the downlink and the uplink the RATHIs are set to zero. Clearly, it is shown that an increase of the time variation of the channel results in a decrease of $p_{c}\left( T,\lambda_{B},\al,\delta, q \right)$. This behaviour continues till the normalised Doppler shift $f_{D}T_{s}$ becomes $f_{D}T_{s}\approx 0.36$. Then, increasing $f_{D}T_{s}$, the coverage probability is improved. Actually, behind this behaviour is hidden the variation of the Jakes autocorrelation function with $f_{D}T_{s}$. In Fig.~\ref{Fig4}, we illustrate the achievable rate versus $\rho$. It is shown that the rate decreases with increasing time-variation, as expected.
\subsection{Comparison between Transmission Techniques}
We elaborate on the following three transmission setups: i) SISO with $M=K=1$, i.e., we have a single transmit antenna per BS and one user ii) full SDMA with $M=K=5$, i.e., $5$ transmit antennas per BS serving $5$ users iii) SU-BF transmission, in order to shed light on the realistic performance of HCNs, as far as the system parameters $M$ and $K$ change. Furthermore, we choose a set of parameters of RATHIs and time variation, representing a realistic scenario,  to be $q=\{0.03,0.03,0.08,0.08\}$ and $\delta=0.1$. 

In Fig.~\ref{Fig6}, we observe that the SU-BF transmission achieves higher coverage probability with comparison to SISO. Note that the latter is better than SDMA. In other words, we verify that it is better to serve
a single user in each resource block, either by SISO or SU-BF, instead of serving multiple users. However, herein and with comparison to~\cite{Dhillon2013}, we illustrate how, given this property, the coverage probability varies with SIR in the presence of the RATHIs. Similar observations have been mentioned in~\cite{Papazafeiropoulos2016b} and in~\cite{Dhillon2013} for the cases of  RATHIs and perfect CSIT with ideal hardware, respectively.
Especially, we illustrate that SU-BF transmission
proves to be better with comparison to SISO in some regimes/situations because, in  addition to the proximity gain enjoyed by the SISO due to extreme densification, the SU-BF transmission presents an additional beamforming
gain. Moreover, SISO outperforms SDMA. More concretely, instead of serving multiple users, it is preferable to serve
a single user in each resource block by means of  SISO or SU-BF. Herein, it is worthwhile to mention that in interference-limited networks, the use of more antennas for multi-stream transmission (SDMA) is not always beneficial, but it depends on which transmission/reception scheme is employed. Also, this claim depends on the
performance metric we study.    Interestingly, having this property in mind, we depict how the  presence of the
RATHIs affect the performance of the system in terms of the degradation of both coverage probability and user rate.

\section{Conclusion} \label{Conclusion} 
In this paper, we presented a new framework for the downlink
cellular network analysis with imperfect CSIT. Specifically, we proposed a multiple antenna HCN that generalises the state of the art by accounting for imperfect CSIT due to pilot contamination, channel aging, and RATHIs. Furthermore, we quantified the performance loss due to imperfect CSIT. We showed that the uplink RATHIs have equivalent contribution to both metrics under investigation, while the downlink RATHIs exhibit different impact. Specifically, the quality of the BS transmitter has greater impact than the receive hardware of the user. Moreover, we demonstrated the loss due to time variation in both coverage probability and user rate. In addition, we demonstrated the outperformance of SU-BF in terms of coverage and user-rate against both SISO and SDMA in the case of a more realistic analysis than previous works did. In other words, we confirmed that, given a total number of transmit antennas, it is better to share them across many BSs than collecting them in a few BSs. Actually, the proposed framework is of high interest because it allows us to explore how various multiple antenna techniques affect the coverage and the rate in realistic conditions. \begin{appendices}
\section{Proof of Proposition~\ref{SINR}}\label{SINRproof}
Initially, we assume  that the columns of the precoding matrix $\hat{\bW}[n]$ with unit norm equal the normalised columns of $\hat{\bH}^{\H}[n]\left( \hat{\bH}[n] \hat{\bH}^{\H}[n] \right)^{-1}$. In other words, $\hat{\bW}[n]=\bar{\bH}^{\H}[n]\left( \bar{\bH}[n] \bar{\bH}^{\H}[n] \right)^{-1}$, where $\bar{\bH}[n]=\left[ \bar{\bh}_{1}[n],\ldots, \bar{\bh}_{K}[n]\right]\in \mathbb{C}^{\left( M \times K \right)}$ with columns $\bar{\bh}[n]=\frac{\hat{\bh}[n]}{\|\hat{\bh}[n]\|}$. As a result, the numerator of the SIR in~\eqref{eq: general sum_rate}, describing the desired signal power, is $\Gamma\left( \Delta,\sigma_{\hat{\bh}_{k}}^{2} \right)$ distributed with $\Delta=M-K+1$ (or else it follows the Erlang distribution with  shape and  scale parameters $\Delta$ 
and $\sigma_{\hat{\bh}_{k}}^{2}$, respectively), since  $Z_{k}[n]=|\bar{\bh}_{k}^{\H}[n]\bw_{k}[n]|^{2}\cdot \|\bh_{k}[n]\|^{2}$. Actually,  $Z_{k}[n]$  can be written as the product of two independent random variables distributed as $B\left( M-K+1,K-1 \right)$ and $\Gamma\left( M,\sigma_{\hat{\bh}_{k}}^{2} \right)$, respectively~\cite{Jindal2006}\footnote{The random variable $\|\hat{\bh}_{k}^{\H}[n]\|^{2}$ is the linear combination of $M$ i.i.d. exponential random variables each with variance $ \sigma_{\hat{\bh}_{k}}^{2}$, i.e., $\|\hat{\bh}_{k}^{\H}[n]\|^{2}\mathop \sim \limits^{\tt d}\Gamma[M,\sigma_{\hat{\bh}_{k}}^{2}]$.}.  Note that ZF beamforming has been applied. Therefore, the resultant PDF of the product follows the $\Gamma\left( \Delta,\sigma_{\hat{\bh}_{k}}^{2} \right)$ distribution. 
The term in the denominator, concerning the error, is expressed in terms of a sum of $K$ independent random variables   as
 \begin{align}
 \!\!\!\!\! \!E_k[n]&=\delta^{-2} \left(1+\kappa_{\mathrm{t}_{\mathrm{BS}}} ^{2}\right)\big\|
                    \tilde{\bee}_{k}^{\H}[n]\hat{\bW}[n\!-\!1]
              \big\|^2\!\!\nn\\
              &=\!\! =\delta^{-2} \left(1+\kappa_{\mathrm{t}_{\mathrm{BS}}} ^{2}\right)\sum_{i=1}^{K} \!\big|
                    \tilde{\bee}_{k}^{\H}[n]\hat{\bw}_{i}[n\!-\!1]
              \big|^2\nn.
 \end{align}
Since $\big|  \tilde{\bee}_{k}^{\H}[n]\hat{\bw}_{i}[n\!-\!1]
              \big|^2$ is the squared modulus of a linear combination of $M$ complex random variables with $\hat{\bw}_{k}[n\!-\!1]$ being independent and having unit norm, it is  distributed as $\Gamma\left( 1,\sigma_{\hat{\bee}_{k}}^{2} \right)$. Thus, $\big\|
                    \tilde{\bee}_{k}^{\H}[n]\hat{\bW}[n\!-\!1]
              \big\|^2$ is  $\Gamma\left( K,\sigma_{\hat{\bee}_{k}}^{2} \right)$ distributed.
Taking the expectation over the transmit distortion noise of the tagged BS, then   $I_{ \etv_{\mathrm{t}}^{\mathrm{BS}}}[n]={p\kappa_{\mathrm{t}_{\mathrm{BS}}}^{2}}\|{\bh}_{k}^{\H}[n]\|^2$, which follows a scaled $\Gamma (M,\sigma_{\hat{\bh}_{k}}^{2})$ distribution. Following a similar procedure, we take the expectation over the receive distortion noise $\etv_{\mathrm{r},n}^{\mathrm{BS}}$. We obtain $I_{ \etv_{\mathrm{r}}^{\mathrm{UE}}}[n]=p{\kappa_{\mathrm{r}_{\mathrm{UE}}}^{2}}\|\bh_{k}[n]\|^2$ following a scaled $\Gamma (M,1)$. Regarding the other term in the denominator, it represents  the interference from other BSs, $I_{l}[n]$. Especially,  $g_{lk,n}=|\bg_{lk}^{\H}[n]\bW_{l}[n]|^2\sim \Gamma (K,1)$ because $\bW_{l}[n-1]$ being the precoding matrices from other BSs have unit-rorm and are independent from the normalised $\bar{\bg}_{lk}[n]$. On this  ground, $g_{lk,n} $  is a linear combination of $K$ independent complex normal random variables with unit variance, i.e., $g_{lk,n}\sim \Gamma (K,1)$.
\section{Proof of Theorem~\ref{theoremCoverageProbability}}\label{CoverageProbabilityproof}
Starting from the definition of $p_{c}\!\left(  T,\lambda_{B},\al,\delta,q\right)$ and by means of appropriate substitution of the SIR $\gamma_{k}$, we have
\begin{align}
 \!\!&p_{c}\!\left(  T,\lambda_{B},\al,\delta,q\right)=\EE\!\left[\mathds{1}\!\left( 	\underset{x_{k,n} \in \Phi_{B}}{\cup} \mathrm{SIR}\left( x_{k,n} \right)>T \right)  \right]\\
 &\le\EE\!\left[ 	\underset{x_{k,n} \in \Phi_{B}}{\cup} \mathds{1}\left(  \mathrm{SIR} \right)>T   \right]\label{coverage_definition0}\\
\! \!&=\!\!\EE \!\left[\sum_{x_{k,n} \in \Phi_{B}}\!\!\mathbb{P}\left[ \mathrm{SIR}>T|l \right]\right]\label{coverage_definition}\\
                                   \! \!&=\!\!  \lambda_{B}\!\!\!\int_{x \in  \mathds{R}^2}\!\!\!\!\!\!\!\EE\left[ \mathbb{P}\!\left[   Z_{k}\!>\!
                                  {T} \! \left(E_k[n]\!+\! I_{ \etv_{\mathrm{t}}}[n]\!+\!I_{ \etv_{\mathrm{r}}}[n] \right) +  {T} l^{\al} I_{l}|l \right]\right]\!\mathrm{d}x,\label{coverage_definition1} 
\end{align}
where in~\eqref{coverage_definition0}, we have employed the union bound or  else the Boole's inequality, and in~\eqref{coverage_definition1}, we have used the Campbell-Mecke Theorem~\cite{Chiu2013a}.  Given that $Z_{k}[n]$ is Gamma distributed, its PDF is provided by~\eqref{eq PDF1 1}. Below, we have removed the index $n$ for the ease of exposition.
We turn our attention to the integrable part of~\eqref{coverage_definition1}. This can be expressed   as
\begin{align}
 &\mathbb{P}\!\left[\!Z_{k}\!>\!{T} \! \left(E_k\!+\! I_{ \etv_{\mathrm{t}}}\!+\!I_{ \etv_{\mathrm{r}}} \right) \!+\!  {T} l^{\al} I_{l}|l \right]\!=\!e^{- \frac{T \left( E_k+I_{ \etv_{\mathrm{t}}}+I_{ \etv_{\mathrm{r}}} \right)}{\sigma_{\hat{\bg}_{k}}^{2}} }
\nn\\
&\!\times\! e^{-\frac{{T} l^{\al} I_{l}}{\sigma_{\hat{\bg}_{k}}^{2}}}\!\sum_{i=0}^{\Delta-1}\sum_{k=0}^{i}\!\binom{i}{k}\frac{\left(  {T}  \left( E_k\!+\!I_{ \etv_{\mathrm{t}}}\!+\!I_{ \etv_{\mathrm{r}}} \right)  \right)^{i-k}\!\left(     {T} l^{\al} I_{l}  \right)^{k}}{i!\left( \sigma_{\hat{\bg}_{k}}^{2} \right)^{i}},\label{coverage1}
                                \end{align}
                                where in~\eqref{coverage1}, we have applied the Binomial theorem. In addition, if we  apply the expectation operator, we obtain 
\begin{align}
 &\!\EE\!\left[\!\mathbb{P}\!\left[Z_{k}\!>\!{T}\! \left( I_{ \etv_{\mathrm{t}}}\!+\!I_{ \etv_{\mathrm{r}}} \right)\! + \! {T} l^{\al} I_{l}|l \right]\right]\!=\!\!
                    \sum_{i=0}^{\Delta-1}\!\sum_{k=0}^{i}\!\sum_{u_1+u_2+u_3=i-k}\!\!\binom{i}{k}\!\nn\\
                  &\!\times\binom{i\!-\!k}{u_1+u_2+u_3}\!\frac{\left( -1 \right)^{i}\!s^{k}  }{i!}\frac{\mathrm{d}^{u_1}}{\mathrm{d}s^{u_1}}\mathcal{L}_{E_k}\left( s \right)\frac{\mathrm{d}^{u_2}}{\mathrm{d}s^{u_2}}\mathcal{L}_{I_{ \etv_{\mathrm{t}}}}\!\!\left(s \right)\!  \nn\\
                  &\times  \frac{\mathrm{d}^{u_3}}{\mathrm{d}s^{u_3}}\mathcal{L}_{I_{ \etv_{\mathrm{t}}}}\!\!\left(s \right)\!   \frac{\mathrm{d}^{k}}{\mathrm{d}s^{k}} \mathcal{L}_{I_{l}}\!\!\left(s \right)\!,\!\label{coverage3}
\end{align}
where we have set  $ s=\frac{{T}}{\sigma_{\hat{\bg}_{k}}^{2}} l^{\al}$. We result in~\eqref{coverage3}, after using the Multinomial theorem. Note that the inner sum is taken over all combinations of nonnegative integer indices $u_1$ through $u_3$ such that the sum  $u_1+u_2+u_3$ is $i-k$. Moreover, we have used the definition of the Laplace Transform $\mathbb{E}_{I }\left[  e^{-s I }\left( s I  \right)^{i}\right]=s^{i}\mathcal{L}\{t^{i}g_{I }\left( t \right)\}\left( s \right)$ and the Laplace identity $t^{i}g_{I }\left( t \right)\longleftrightarrow \left( -1 \right)^{i}\frac{\mathrm{d}^{i}}{\mathrm{d}^{i}s}\mathcal{L}_{I }\{g_{I }\left( t \right)\}\left( s \right)$. 
Regarding the Laplace transform  $\mathcal{L}_{I_{l}}\left( s \right)$, it is obtained by means of Proposition~\ref{LaplaceTransform}, while the transforms $\mathcal{L}_{I_{ \etv_{\mathrm{t}}}}$ and    $\mathcal{L}_{I_{ \etv_{\mathrm{r}}}}\!\left(s \right)$ are provided by Lemma~\ref{LaplaceTransformGamma}. Finally, the proof is concluded after substituting~\eqref{coverage3} into~\eqref{coverage_definition1}.              
\section{Proof of Proposition~\ref{LaplaceTransform}}\label{LaplaceTransformproof}
The Laplace transform of the interference power relies on its PDF. Specifically,  setting $g_{l}\triangleq g_{lk,n}$, accounting for  the power of the interference channel, has  identical distribution for all $l$,  $\mathcal{L}_{I_{l}}\left( s \right)$ can be obtained as
\begin{align}
 &\!\!\mathcal{L}_{I_{l}}\!\left(  s \right)=\mathbb{E}_{I_{l}}\!\left[  e^{-s {I_{l}}}\right]=\mathbb{E}_{I_{l}}\!\left[  e^{-s \sum_{l\in \Phi_{B}\backslash x} g_{l} y^{-\al}}\right]\nn\\
  &\!\mathop = \mathbb{E}_{\Phi_{B},g_{l}}\!\left[\prod_{l \in \Phi_{B}\backslash x} e^{-s  g_{l} y^{-a}} \right] \label{laplace 2}\\
&\!\mathop = \mathbb{E}_{\Phi_{B}}\!\left[\prod_{l \in \Phi_{B}\backslash x}  \mathcal{L}_{g_{l}}\left( s  y^{-a} \right)  \right]\label{laplace 3}\\
  &\!\mathop =   \mathrm{exp}\!\left( -{\lambda_{B}}\int_{\mathbb{R}^{2}} \left( 1-\mathcal{L}_{g_{l}}\left( s  y^{-a} \right)   \right)\mathrm{d}y\right)\label{laplace 4}\\
&\! \!\mathop =\mathrm{exp}\!\Bigg(\Bigg.\!\!\!-\!2 \pi \lambda_{B}\!\!\!\int_{0}^{\infty}  \!\!\bigg(\bigg. \frac{ \sum_{m=1}^{K}\binom{K}{m} \left( {s}  r^{-a} \right)^{m}}{\left( 1\!+\!{s}  r^{-a} \right)^{K}}r\mathrm{d}r\Bigg)\!\!\label{laplace5} \\
&\!\!\! \!\mathop =\mathrm{exp}\!\Bigg(\Bigg.\!\!-\frac{2 \pi \lambda_{B} {s}^{\frac{2}{a}}}{\al}\!\!\sum_{m=1}^{K}\!\!\binom{\!K\!}{\!m\!}\mathrm{B}\!\left(\! K\!-\!m\!+\!\frac{2}{a},m\!-\!\frac{2}{a} \right)\!\!\!\!\Bigg).\!\!\!\nn
\end{align}
In particular,~\eqref{laplace 2} results from the independence among the  locations of the BSs. Moreover~\eqref{laplace 2} holds due to the independence between the spatial and the fading distributions. The property of the probability generating functional (PGFL) regarding the PPP~\cite{Chiu2013} is used to obtain~\eqref{laplace 4}. Furthermore, we substitute the Laplace transform of $g_{l}$ following a Gamma distribution. Note that in the next step, application of the Binomial theorem takes place, while we continue with conversion of the Cartesian coordinates to polar coordinates in~\eqref{laplace5}. Finally, we conclude the proof by  calculating the integral in the following way. We make the substitution  $\left( 1+r^{-\al} \right)^{-1}\rightarrow t$, and after many algebraic manipulations, and the use of~\cite[Eq.~(8.380.1)]{Gradshteyn2007} we lead to the desired result.
\section{Proof of Theorem~\ref{AverageAchievableRate}}\label{AverageAchievableRateproof} 
The achievable rate of the typical user at the origin is derived as
\begin{align}
 &R_{k}\left(  T,\lambda_{B},\al,\delta,q \right)=\EE\left[\ln \left( 1+\mathrm{SIR} \right)\right]	\label{rate1} \\
 &=\int_{x \in  \mathds{R}^2} \EE \left[\ln \left( 1+\mathrm{SIR} \right)\right]\mathrm{d}l\label{rate2}\\
  &=\int_{x \in  \mathds{R}^2} \int_{t>0} \mathbb{P} \left[\ln \left( 1+\mathrm{SIR} \right)\right]\mathrm{d}t\mathrm{d}l\label{rate3}\\
  &=\int_{x \in  \mathds{R}^2} \int_{t>0}   \mathbb{P}\!\left[h_{k}\!>\!\left({T} \left( I_{ \etv_{\mathrm{t}}}+I_{ \etv_{\mathrm{r}}} \right) +  {T} l^{\al} I_{l}\right)\left(e^{t}-1  \right) \right]
\mathrm{d}t\mathrm{d}l\label{rate4}\\
&=\int_{x \in  \mathds{R}^2} \int_{t>0}
\EE\big[ \!e^{- \frac{{T}}{\sigma_{\hat{\bg}_{k}}^{2}} \left( E_k+I_{ \etv_{\mathrm{t}}}+I_{ \etv_{\mathrm{r}}} \right)\left(e^{t}-1  \right) }e^{-\frac{{T}}{\sigma_{\hat{\bg}_{k}}^{2}} l^{\al} I_{l}\left(e^{t}-1  \right)}
\nn\\
&\!\times\! \!\sum_{i=0}^{\Delta-1}\!\sum_{k=0}^{i}\!\!\binom{\!i}{\!k}\!\frac{\left(  {T} \! \left( E_k\!+\!I_{ \etv_{\mathrm{t}}}\!+\!I_{ \etv_{\mathrm{r}}} \right)\!\left(e^{t}\!-\!\!1  \right)  \right)^{i-k}\!\left(     {T} l^{\al} I_{l} \!\left(e^{t}\!\!-1  \right) \!\right)^{k}}{i! \left( \sigma_{\hat{\bg}_{k}}^{2} \right)^{i}}\big]\!
\mathrm{d}t\mathrm{d}l\label{rate5}\\
&\!=\!\!\int_{\!x \in  \mathds{R}^2}\!\! \int_{\!t>0} \!\sum_{i=0}^{\Delta\!-\!1}\!\sum_{k=0}^{i}\!\sum_{u_1+u_2+u_3=i-k}\!\!\!\binom{\!i\!}{\!k\!}\!\!\binom{\!i\!-\!k\!}{\!u_1\!+\!u_2\!\!+u_3\!}\!\frac{\left(\! -1\! \right)^{i}\!s^{k}  }{i!}\nn\\
                  &\!\times\frac{\mathrm{d}^{u_1}}{\mathrm{d}s^{u_1}}\mathcal{L}_{E_k}\left( l^{-\al }s \left(e^{t}-1  \right) \right)\frac{\mathrm{d}^{u_2}}{\mathrm{d}s^{u_2}}\mathcal{L}_{I_{ \etv_{\mathrm{t}}}}\!\!\left(l^{-\al}s\left(e^{t}-1  \right) \right)\!  \nn\\
                  &\times  \frac{\mathrm{d}^{u_3}}{\mathrm{d}s^{u_3}}\mathcal{L}_{I_{ \etv_{\mathrm{r}}}}\!\!\left(l^{-\al} s\left(e^{t}-1  \right) \right)\!   \frac{\mathrm{d}^{k}}{\mathrm{d}s^{k}} \mathcal{L}_{I_{l}}\!\!\left(s\left(e^{t}-1  \right) \right)\!
\mathrm{d}t\mathrm{d}l,\label{rate6}
 \end{align}
where the expectation in~\eqref{rate1} is taken over the fading distribution and the spatial randomness described by a PPP. In~\eqref{rate2}, we have applied the property, concerning a positive random variable $x$, $\EE\left[ x\right]=\int_{t>0}\mathbb{P}\left( x>t \right)\mathrm{d}t$. Next, we consider that $h_{k}$ follows a Gamma distribution with shape and rate parameters, $\Delta$ and $1$, respectively. 
\end{appendices}

\bibliographystyle{IEEEtran}

\bibliography{mybib}
\begin{IEEEbiography}[{\includegraphics[width=1in,height=1.25in,clip,keepaspectratio]{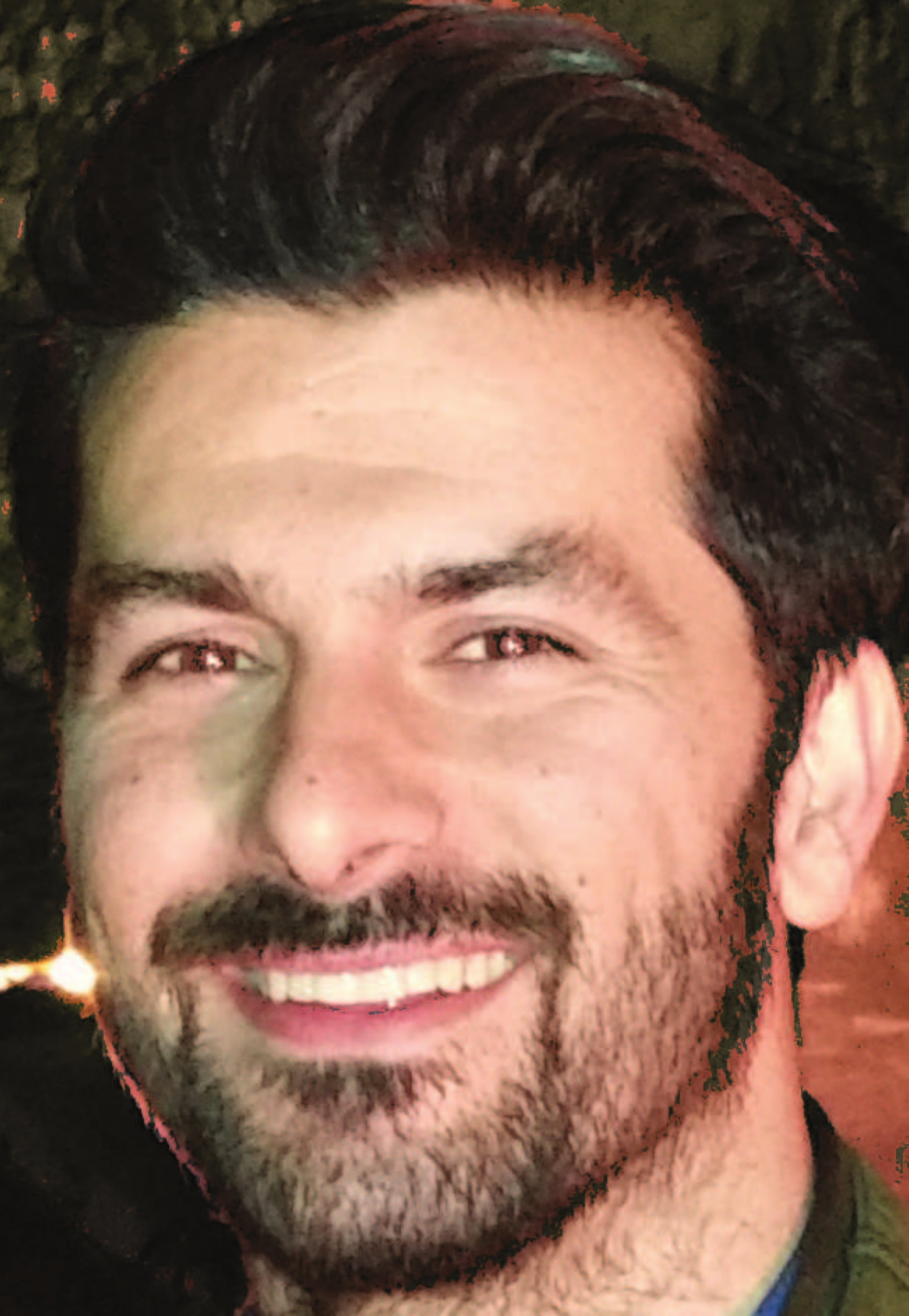}}]{Anastasios Papazafeiropoulos}[S'06-M'10] is currently a Research Fellow in IDCOM at the University of Edinburgh, U.K. He obtained the B.Sc in Physics and the M.Sc. in Electronics and Computers science both with distinction from the University of Patras, Greece in 2003 and 2005, respectively. He then received the Ph.D. degree from the same university in 2010. From November 2011 through December 2012 he was with the Institute for Digital Communications (IDCOM) at the University of Edinburgh, U.K. working as a postdoctoral Research Fellow, while during 2012-2014 he was a Marie Curie Fellow at Imperial College London, U.K.   Dr. Papazafeiropoulos has been involved in several EPSCRC and EU FP7 HIATUS and HARP projects. His research interests span massive MIMO, 5G wireless networks, full-duplex radio, mmWave communications, random matrices theory, signal processing for wireless communications, hardware-constrained communications,
and performance analysis of fading channels. 
\end{IEEEbiography}
\begin{IEEEbiography}
[{\includegraphics[width=1in,height=1.25in,clip,keepaspectratio]{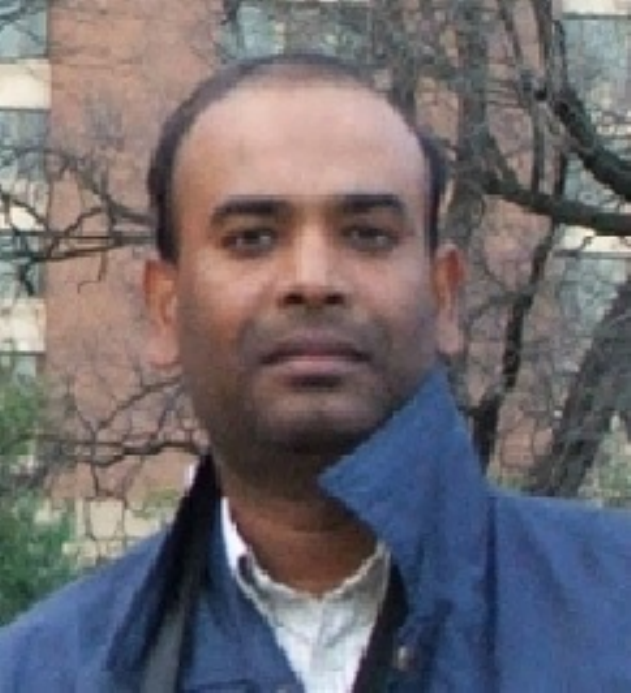}}]{Tharmalingam Ratnarajah}[A'96-M'05-SM'05] is currently with the Institute for Digital Communications, University of Edinburgh, Edinburgh, UK, as a Professor in Digital Communications and Signal Processing and the Head of Institute for Digital Communications. His research interests include signal processing and information theoretic aspects of 5G and beyond wireless networks, full-duplex radio, mmWave communications, random matrices theory, interference alignment, statistical and array signal processing and quantum information theory. He has published over 300 publications in these areas and holds four U.S. patents. 
He was the coordinator of the FP7 projects ADEL (3.7M\euro) in the area of licensed shared access for 5G wireless networks and HARP (3.2M\euro) in the area of highly distributed MIMO and FP7 Future and Emerging Technologies projects HIATUS (2.7M\euro) in the area of interference alignment and CROWN (2.3M\euro) in the area of cognitive radio networks. Dr Ratnarajah is a Fellow of Higher Education Academy (FHEA), U.K..
\end{IEEEbiography}

\end{document}